\documentclass[runningheads,orivec]{llncs}
\usepackage{cite}
\usepackage{subcaption} 
\usepackage{amsmath,amssymb,amsfonts}
\usepackage{thm-restate}
\usepackage{booktabs}
\usepackage[ruled,linesnumbered]{algorithm2e}
\SetKwRepeat{Do}{do}{while}
\SetKwComment{tcp}{$\rhd\;$}{}%
\SetKw{True}{true}
\usepackage{placeins}
\usepackage{graphicx}
\usepackage[export]{adjustbox}
\usepackage{textcomp}
\usepackage{xcolor}
\usepackage{tcolorbox}
\usepackage{xspace}
\usepackage{multirow}
\usepackage{tikz}
\usetikzlibrary{automata, positioning, arrows.meta, calc}

\usepackage{caption}
\usepackage[breaklinks]{hyperref}
\hypersetup{
colorlinks = true, 
linkcolor=blue, 
citecolor=blue, 
urlcolor=blue 
}

\usepackage{orcidlink}
\usepackage[misc,geometry]{ifsym}

\usepackage{enumerate}
\usepackage{enumitem}
\makeatletter
\def\namedlabel#1#2{\begingroup
    #2%
    \def\@currentlabel{#2}%
    \phantomsection\label{#1}\endgroup
}
\makeatother

\iftrue
    \usepackage{todonotes}
    
    \newcommand{\aj}[1]{\todo[inline,color=blue!10,caption={AJ}]{\textbf{AJ:} #1}}
    \newcommand{\at}[1]{\todo[inline,color=teal!10,caption={AT}]{\textbf{AT:} #1}}
    \newcommand{\ly}[1]{\todo[inline,color=orange!10,caption={LY}]{\textbf{LY:} #1}}
    \newcommand{\mjj}[1]{\todo[inline,color=purple!10,caption={MJj}]{\textbf{Mjj:} #1}}
\else
    \newcommand{\aj}[1]{}
    \newcommand{\at}[1]{}
    \newcommand{\ly}[1]{}
    \newcommand{\mjj}[1]{}
\fi

\newcommand{\wordletter}[2]{#1{[#2]}}

\newcommand{\emptyword}{\varepsilon}

\newcommand{\finwords}{\alphabet^*}

\newcommand{\setnocond}[1]{\{#1\}}

\newcommand{\A}{\mathcal{A}}

\newcommand{\T}{\mathcal{T}}

\newcommand{\pref}{\mathsf{prefixes}}

\newcommand{\trans}{\delta}
\newcommand{\init}{\iota}
\newcommand{\alphabet}{\Sigma}
\newcommand{\states}{Q}

\newcommand{\benchexec}{\textsc{BenchExec}\xspace}
\newcommand{\toolOriginal}{\textsc{ParetoAPTA}\xspace}
\newcommand{\toolLaufferOurEncoding}{\textsc{Pareto3DFA}\xspace}
\newcommand{\toolOurStateOptimalDIP}{\textsc{StatesOptimalDIP}\xspace}

\newcommand{\setcond}[2]{\{\, #1 \mid #2 \,\}}

\newcommand{\nats}{\mathbb{N}}

\newcommand{\lang}{\mathcal{L}}

\newcommand{\proc}[1]{\textsc{#1}}

\newcommand{\entropy}{\mathcal{E}}

\newcommand{\myparagraph}[1]{\medskip\noindent{\bf #1.}}

\newcommand{\oomit}[1]{}

\renewcommand{\epsilon}{\varepsilon}

\allowdisplaybreaks

\begin{document}

\title{Efficient Decomposition Identification of Deterministic Finite Automata from Examples}

\titlerunning{Efficient Decomposition Identification of DFAs from Examples}
%

\author{
Junjie Meng \inst{1}\orcidlink{0009-0009-0603-1899} \and
Jie An \inst{2}$^{\text{(\Letter)}}$\orcidlink{0000-0001-9260-9697} \and
Yong Li \inst{3}$^{\text{(\Letter)}}$\orcidlink{0000-0002-7301-9234} \and
Andrea Turrini \inst{3}\orcidlink{0000-0003-4343-9323} \and
Fanjiang Xu \inst{2}\orcidlink{0009-0004-6016-7360} \and \\
Naijun Zhan \inst{4}\orcidlink{0000-0003-3298-3817} \and
Miaomiao Zhang \inst{1}$^{\text{(\Letter)}}$\orcidlink{0000-0001-9179-0893}
}
\authorrunning{J. Meng et al.}
%
\institute{
School of Computer Science and Technology, Tongji University, Shanghai, China \and
National Key Lab. of Space Integrated Information System, Institute of Software, Chinese Academy of Sciences, Beijing, China  \and
Key Lab. of System Software (Chinese Academy of Sciences) and State Key
Lab. of Computer Science, Institute of Software, Chinese Academy of Sciences, Beijing, China \and
School of Computer Science \& MOE Key Laboratory of High Confidence Software Technologies, Peking University, Beijing, China 
\\ \email{anjie@iscas.ac.cn, liyong@ios.ac.cn, miaomiao@tongji.edu.cn}
}

\maketitle              
\begin{abstract}
The identification of deterministic finite automata (DFAs) from labeled examples is a cornerstone of automata learning, yet traditional methods focus on learning monolithic DFAs, which often yield a large DFA lacking simplicity and interoperability. Recent work addresses these limitations by exploring DFA decomposition identification problems (DFA-DIPs), which model system behavior as intersections of multiple DFAs, offering modularity for complex tasks. However, existing DFA-DIP approaches depend on SAT encodings derived from Augmented Prefix Tree Acceptors (APTAs), incurring scalability limitations due to their inherent redundancy.

In this work, we advance DFA-DIP research through studying two variants: the traditional Pareto-optimal DIP and the novel states-optimal DIP, which prioritizes a minimal number of states. We propose a novel framework that bridges DFA decomposition with recent advancements in automata representation. One of our key innovations replaces APTA with 3-valued DFA (3DFA) derived directly from labeled examples. This compact representation eliminates redundancies of APTA, thus drastically reducing variables in the improved SAT encoding. Experimental results demonstrate that our 3DFA-based approach achieves significant efficiency gains for the Pareto-optimal DIP while enabling a scalable solution for the states-optimal DIP.

\keywords{DFA decomposition \and DFA identification \and Model learning \and Passive learning \and SAT solving \and Grammatical inference.}
\end{abstract}

\setcounter{footnote}{0}

\section{Introduction} \label{sec:introduction}

The identification of Deterministic Finite Automata (DFAs) from labeled examples is a fundamental problem in computer science, with applications in inference of network invariants~\cite{DBLP:conf/cade/GrinchteinLP06}, grammatical inference~\cite{DBLP:journals/pr/Higuera05}, model checking~\cite{DBLP:conf/atva/Neider12}, reinforcement learning~\cite{LaufferYVSS22}, etc.
Known as passive model learning~\cite{Vaandrager17}, classical methods focus on inferring a single DFA from examples. The generated DFA may have a large size and thus suffers from a lack of simplicity and interpretability, as a single DFA representing complex system behaviors can have a very intricate structure. 

To address this issue, recent research~\cite{LaufferYVSS22} has moved toward the \emph{DFA decomposition identification problem} (DFA-DIP), i.e., inferring a group of DFAs from examples, where their conjunction language includes all positive examples and excludes all negative examples. 
This approach allows for capturing sub-tasks performed by the system, with system behavior being described as the intersection of the languages from several DFAs, thereby improving interpretability.

Existing DFA identifying approaches~\cite{HeuleV10,ZakirzyanovSU17,ZakirzyanovMIUM19} typically first employ a data structure called the Augmented Prefix Tree Acceptor (APTA)~\cite{CosteN98} to represent the labeled examples and then encode the identification of the minimal DFA from the APTA as a satisfiability (SAT) problem for Boolean formulas.
The number of Boolean variables required in the SAT problem grows polynomially with the size of the constructed APTA.
However, the size of the APTA increases dramatically with both the number and the length of the examples, since every prefix of the examples corresponds to a unique state in the APTA. 
This results in a corresponding increase in the number of Boolean variables and, consequently, the size of the SAT problem, posing a significant challenge to SAT solvers. 

To alleviate this challenge, a recent work~\cite{DellErbaLS24} extends the technique presented in~\cite{DaciukMWW00} to construct the minimal 3-valued DFA (3DFA) consistent with the given examples, i.e., the 3DFA accepts all positive examples and rejects all negative examples. According to its definition, APTA can be viewed as a specific kind of 3DFA. The constructed minimal 3DFA is dramatically smaller than the original APTA. Hence, the proposed method via minimal 3DFAs in~\cite{DellErbaLS24} significantly improves the DFA identification.

Since the state-of-the-art algorithm for identifying DFA decompositions~\cite{LaufferYVSS22} still relies on the basic APTA construction, a natural improvement would be to apply this minimal 3DFA construction to further reduce the number of Boolean variables required.
However, our findings suggest that the minimal 3DFA construction from~\cite{DellErbaLS24} \emph{cannot} be directly applied to the current DFA decomposition learning framework~\cite{LaufferYVSS22}.
This is because with a minimal 3DFA, its structural characteristics are different from those of the prefix tree, which makes the original encoding no longer applicable (cf.~Section~\ref{sec:encoding}).

\myparagraph{Contributions} To advance the DFA decomposition identification research, we make several contributions in this paper, as summarized below:
\begin{itemize}  
    \item We review the Pareto-optimal DIP studied in~\cite{LaufferYVSS22} and introduce a novel DFA-DIP, named the states-optimal DIP, that prioritizes decompositions with smaller state spaces. (Section~\ref{sec:problems})
    \item We propose a new method for constructing a succinct 3DFA consistent with the given examples and an improved SAT encoding via 3DFA, both tailored for DFA-DIPs. (Section~\ref{sec:encoding})
    \item We propose a method for solving the novel states-optimal DIP. (Section~\ref{sec:overall-algo})
    \item We have implemented the improved method for solving the Pareto-optimal DIP by replacing the original encoding via APTA in~\cite{LaufferYVSS22} with our improved encoding via 3DFA, and have also implemented our method for solving the states-optimal DIP. The experimental results show that our new 3DFA construction significantly reduces the number of states compared to using APTA and dramatically improves the efficiency in solving the Pareto-optimal DIP. Additionally, we present preliminary experiments on the novel states-optimal DIP, highlighting the scalability of our method. (Section~\ref{sec:experiments})
\end{itemize}


\myparagraph{Related work} We review the most related work on DFA identification from examples. 
The most common approach is the evidence driven state-merging (EDSM) algorithm~\cite{DBLP:conf/icgi/LangPP98}. It first constructs an APTA consistent with the given examples, and then iteratively applies a state-merging procedure until
no valid merges are left. However, the issue of this algorithm is that it terminates at a local optimum. 
Current SAT-based methods typically construct an APTA first, and then reduce the problem to a SAT solving problem~\cite{HeuleV10}, with its help. These approaches return a minimal DFA.
Several works~\cite{ZakirzyanovMIUM19,UlyantsevZS15,UlyantsevZS16,ZakirzyanovSU17} have improved the second step by proposing symmetry-breaking techniques and compact SAT encoding.
More recently, the first step, which is an APTA construction, has been improved in~\cite{DellErbaLS24} by extending the technique in~\cite{DaciukMWW00} to construct a minimal 3DFA consistent with the given examples, instead of an APTA. The minimal 3DFA requires fewer states than APTA. 
However, all these works only consider learning a \emph{single} DFA from the examples.
In contrast to them, \cite{LaufferYVSS22} introduces DFA-DIP and extends the SAT encoding via APTA to infer DFA decompositions under their so-called Pareto-optimal partial order.
As the minimal 3DFA cannot directly be used for DFA-DIP, our work proposes a method to construct a non-minimal 3DFA with fewer states than APTA and an improved SAT encoding for DFA-DIP. 
All the works above, including ours, focus on learning \emph{unknown} DFA decompositions from examples.
When the finite-state automata to decompose are known, specific approaches~\cite{Ashar1992,Egri-Nagy11,KupfermanM15} can be used. 

\myparagraph{Outline} 
After reviewing the basic definitions in Section~\ref{sec:preliminaries}, we give the formal definitions of the two DFA-DIPs in Section~\ref{sec:problems}. 
We then introduce our improved SAT encoding via 3DFA in Section~\ref{sec:encoding} and our method for solving the states-optimal DIP and the Pareto-optimal DIP in Section~\ref{sec:overall-algo}. 
We present our experimental evaluation in Section~\ref{sec:experiments} and some concluding remarks in Section~\ref{sec:conclusion}.

In what follows, due to the limited space available, the omitted proofs of lemmas and theorems are given in Appendix~\ref{appendix:proofs}.

%
\section{Preliminaries} \label{sec:preliminaries}


In this paper, given $n \in \nats$, we denote by $[n]$ the set $\setnocond{1, 2, \dots, n}$. 
We fix a finite alphabet $\alphabet$ of letters. 
A \emph{word} $u$ is a finite sequence of letters in $\Sigma$. 
We denote by $\epsilon$ the empty word and with $\Sigma^{*}$ the set of all finite words, and let $\Sigma^{+} = \Sigma^{*} \setminus \{\epsilon\}$. 
Given a word $u$, we denote by $|u|$ the \emph{length} of $u$ ($|\epsilon| = 0$) and by $\wordletter{u}{i}$ the $i$-th letter of $u$ for $0 \leq i < |u|$. 
Given two words $u$ and $v$, we denote by $u \cdot v$ (for short $uv$) their concatenation. 
We say that a word $u$ is a \emph{prefix} of a word $w$ if $w = u \cdot v$ for some word $v \in \Sigma^{*}$. 
We denote by $\pref(u)$ the set of all prefixes of $u$ and we extend it to a set of words $S$ in the usual way, i.e., $\pref(S) = \bigcup_{u \in S} \pref(u)$.

\myparagraph{Transition system}
A \emph{deterministic} transition system (TS) is a tuple $\T = (\states, \init, \trans)$, where $\states$ is a finite set of states, $\init \in \states$ is the initial state, and $\trans \colon \states \times \alphabet \to \states$ is a transition function. 
We also extend $\trans$ from letters to words in the usual way, by letting $\trans(q, \emptyword) = q$ and $\trans(q, a \cdot u) = \trans(\trans(q, a), u)$, where $u \in \finwords$ and $a \in \alphabet$.
The \emph{run} of a TS $\T$ on a finite word $u$ of length $n$ is the sequence of states $\rho = q_{0} q_{1} \cdots q_{n} \in \states^{+}$ such that, for every $0 \leq i < n$, $q_{i+1} = \trans(q_{i}, \wordletter{u}{i})$.

\begin{definition}[Deterministic finite automata] \label{def:dfa}
    A deterministic finite automaton (DFA) is a tuple $\A = (\T, A)$, where $\T$ is a deterministic TS and $A\subseteq Q$ is a set of \emph{accepting} states. 
\end{definition}

It is easy to extend DFAs to process languages with ``don't-care'' words. 

\begin{definition}[3-valued DFAs~\cite{DellErbaLS24}] \label{def:3dfa}
    A 3-valued DFA (3DFA) is a triple $\A = (\T, A, R)$, where $\T$ is a deterministic TS and $A$, $R$, and $D = \states \setminus (A \cup R)$ partition the set of states $\states$, where $A \subseteq \states$ is the set of \emph{accepting} states, $R \subseteq \states$ is the set of \emph{rejecting} states, and the remaining states $D$ are called \emph{don't-care} states.
\end{definition}
A run is \emph{accepting} (respectively, \emph{rejecting}) if it ends in an accepting (resp.\@ rejecting) state.
A finite word $u \in \finwords$ is \emph{accepted} (resp.\@ \emph{rejected}) by $\A$ if it has an accepting (resp.\@ rejecting) run on $u$.
3DFAs map all words in $\finwords$ to \emph{three} values:
accepting ($+$), rejecting ($-$), and don't-care ($?$), where they are accepting if they have an accepting run, rejecting if they have a rejecting run, and don't-care otherwise.
Note that DFAs are a special type of 3DFAs with only accepting and rejecting states.
We denote the language of $\A$ by $\lang(\A)$, i.e., the set of words accepted by $\A$.

\section{Problem Formalization} \label{sec:problems}

In this work we consider two specific DFA decomposition identification problems (DIPs). 
One is the Pareto-optimal DIP established in~\cite{LaufferYVSS22}, and the other is the states-optimal DIP we introduce in this paper. 
Before formally presenting the two DIPs, we recall the definition of DFA decompositions.

\begin{definition}[DFA decomposition~\cite{LaufferYVSS22}] \label{def:dfa_decomposition}
    Let $\mathcal{H}$ be the set of all DFAs over $\Sigma$. 
    A $(m_{1}, \dots, m_{n})$-DFA decomposition is a tuple of $n$ DFAs $(\A_{1}, \dots, \A_{n}) \in \mathcal{H}^n$ such that each DFA $\A_{i}$ has $m_{i}$ states and $m_{1} \leq m_2 \leq \dots \leq m_{n}$.
\end{definition}

A decomposition $(\A_{1}, \dots, \A_{n})$ accepts a word $u$ if and only if all DFAs accept $u$, i.e., $u \in \lang(\A_{i})$ for all $1 \leq i \leq n$; a word that is not accepted is rejected.
Therefore its language $\lang(\A_{1}, \dots, \A_{n})$ is the intersection of the languages of the individual DFAs, i.e., $\lang(\A_{1}, \dots, \A_{n}) = \bigcap_{1 \leq i \leq n} \lang(\A_{i})$. We say $(m_{1},\dots,m_{n})$ is the \emph{states allocation} of the DFA decomposition $(\A_{1}, \dots, \A_{n})$.

\myparagraph{DFA decomposition identification problem (DFA-DIP)} 
Given a set of labeled examples $S = (S^{+}, S^{-})$ where $S^{+}$ contains positive examples and $S^{-}$ contains negative examples, respectively, a general DFA-DIP asks to identify a DFA decomposition consistent with the example set $S$; 
formally, it asks to find a DFA decomposition $(\A_{1}, \dots, \A_{n})$ for a given integer $n \in \nats$ that satisfies:
\begin{description}
\item[\namedlabel{itm:consistency}{C1 Consistency}:] $S^{+} \subseteq \mathcal{L}(\A_{1}, \dots, \A_{n})$ and $S^{-} \subseteq \Sigma^{*} \setminus \mathcal{L}(\A_{1}, \dots, \A_{n})$.
\end{description}
Obviously, the identification problem of monolithic DFAs is a special case of DFA-DIP with $n=1$, i.e., learning a single DFA from examples.
We say a DFA decomposition $(\A_{1}, \cdots, \A_{n})$ is \emph{consistent} with $S$ if it satisfies \hyperref[itm:consistency]{\textbf{C1}}.

To compare decompositions, a \emph{Pareto-optimal partial order} $\prec$ has been introduced in~\cite{LaufferYVSS22}. 
Formally, given a $(m_{1}, \dots, m_{n})$-DFA decomposition $(\A_{1}, \dots, \A_{n})$ and a $(m'_{1}, \dots, m'_{n})$-DFA decomposition $(\A'_{1}, \dots, \A'_{n})$, we say $(\A_{1}, \dots, \A_{n}) \prec (\A'_{1},\dots,\A'_{n})$ if $m_{i} \leq m'_{i}$ for all $1 \leq i \leq n$ and there exists $1 \leq j \leq n$ such that $m_{j} < m'_{j}$.
If so, we say that $(\A_{1}, \dots, \A_{n})$ \emph{dominates} $(\A'_{1}, \dots, \A'_{n})$. 

As incomparable decompositions exist, they can form a Pareto frontier of solutions to the Pareto-optimal DIP defined as follows:

\begin{tcolorbox}[boxrule=1pt,colback=white,colframe=black!75, left = 1pt, right = 1pt, top = 0pt, bottom = 0pt]
\small
\noindent\textbf{Pareto-Optimal DIP~\cite{LaufferYVSS22}.} Given a set of labelled examples $S = (S^{+}, S^{-})$, and an integer $n \in \nats$, find a $(m_{1}, \dots, m_{n})$-DFA decomposition $(\A_{1}, \dots, \A_{n})$ such that (i) it satisfies \hyperref[itm:consistency]{\textbf{C1}}, and (ii) there does not exist a decomposition satisfying \hyperref[itm:consistency]{\textbf{C1}} dominating it under the Pareto-optimal partial order $\prec$.
\end{tcolorbox}


In this paper, we also introduce a novel DFA-DIP --- states-optimal DIP --- which depends on the number of states and the \emph{entropy} of a decomposition defined as follows. 

\begin{definition}[Entropy of DFA decomposition] \label{def:entropy}
Given a $(m_{1}, \dots, m_{n})$-DFA decomposition $(\A_{1}, \dots, \A_{n})$, its entropy is defined as
\[
    \entropy(\A_{1}, \dots, \A_{n}) = -\sum_{i=1}^{n} P(i)\log_2 P(i)
\]
where $P(i) = m_{i}/\sum_{j=1}^{n} m_{j}$ for $1 \leq i \leq n$.
\end{definition}




Based on Definition~\ref{def:entropy}, we present a \emph{states-optimal preorder} as follows. 
Given a $(m_{1}, \dots, m_{n})$-DFA decomposition $(\A_{1}, \dots, \A_{n})$ and a $(m'_{1}, \dots, m'_l)$-DFA decomposition $(\A'_{1}, \dots, \A'_l)$, we say $(\A_{1}, \dots, \A_{n}) \lessdot (\A'_{1}, \dots, \A'_l)$ if $\sum_{i=1}^{n} m_{i} < \sum_{j=1}^{l} m'_{j}$, or $\entropy(\A_{1}, \dots, \A_{n}) \geq \entropy(\A'_{1}, \dots, \A'_l)$ if $\sum_{i=1}^{n} m_{i} = \sum_{j=1}^{l} m'_{j}$. 
The preorder reflects that we prefer a DFA decomposition that has the minimal number of states or, for the same number of states, it contains more individual DFAs with similar (smaller) size. 

A decomposition with a higher entropy value indicates a more evenly distributed state allocation among DFAs, leading to a more flexible and potentially more generalizable representation of the input data. 
In contrast, lower entropy values suggest that the state allocation is concentrated in fewer DFAs, which may reduce the system’s ability to capture diverse structural variations in the data. 
Note that the states-optimal preorder does not require two DFA decompositions having the same number of DFAs, as does the Pareto-optimal partial order. Therefore, the states-optimal preorder is a total preorder. 


\begin{tcolorbox}[boxrule=1pt,colback=white,colframe=black!75, left = 1pt, right = 1pt, top = 0pt, bottom = 0pt]
\small
\noindent\textbf{States-Optimal DIP.} Given a set of labelled examples $S = (S^{+}, S^{-})$, find $(m_{1}, \dots, m_{n})$-DFA decomposition $(\A_{1}, \dots, \A_{n})$ for some $n \in \nats$ such that it is a minimal decomposition w.r.t.\@ the states-optimal preorder $\lessdot$ satisfying \hyperref[itm:consistency]{\textbf{C1}}.
\end{tcolorbox}

Note that the states-optimal DIP does not require a given input $n$ to restrict the number of DFAs in the resulted DFA decomposition. 
Of course, a specific variant is to find a states-optimal decomposition with a given input $n$.

\section{An Improved SAT Encoding of DFA-DIP via 3DFAs} \label{sec:encoding}
In this section, we first briefly review the existing encoding via APTA proposed in~\cite{LaufferYVSS22}, then present our improved SAT encoding of DFA-DIP via 3DFA. 
We describe a challenge in utilizing minimal 3DFAs~\cite{DellErbaLS24} within our encoding at last.

\subsection{Existing encoding via APTAs} 
\label{subsec:apta_encoding}

In~\cite{LaufferYVSS22}, Lauffer \textit{et~al.} proposed a SAT encoding method for DFA-DIP by extending the SAT encoding for identifying a single DFA from examples~\cite{HeuleV10,UlyantsevZS15,UlyantsevZS16,ZakirzyanovMIUM19}.
Their encoding reduces DFA-DIP into a graph coloring problem where the graph is in fact an APTA, a tree-based 3DFA.
In the graph coloring problem, the state of each node of the tree structure, each edge of the tree, and for each state, each transition of the DFA, can be represented by a different color variable~\cite{HeuleV10}.

In an APTA, each state corresponds to a unique prefix of a word in the set of examples $S = (S^{+}, S^{-})$ that we also write as $S = S^{+} \cup S^{-}$. 
First, one can define a function $f \colon \pref(S) \to \nats_S $ where $\nats_S = \{0, \dots, |\pref(S)|-1\}$. 
Intuitively, $f$ maps each prefix $u \in \pref(S)$ to a state in the APTA represented by a unique number in $\nats_S$.
Formally, an APTA $\mathcal{P}$ of $S$ is a 3DFA $(\T, A, R )$ where the TS $\T$ consists of the set of states $\nats_S$, the initial state $f(\epsilon)$, and the transition function $\delta$ defined as 
$\delta(i, a) = j$ if $f(u) = i$ and $f(ua) = j$ where $u, ua \in \pref(S)$ and $a \in \Sigma$;
we define $A = \setcond{i \in \nats_S}{\text{$f(u) = i$ for some $u \in S^{+}$}}$ and $R = \setcond{i \in \nats_S}{\text{$f(u) = i$ for some $u \in S^{-}$}}$.
For example, Fig.~\hyperref[subfig:example_apta]{1(a)} shows the APTA generated from the set of examples $S = (S^{+}, S^{-})$ where $S^{+} = \{aab, aaa, ab\}$, $S^{-} = \{b, aba\}$, and $f = \{\epsilon \mapsto 0, a \mapsto 1, aa \mapsto 2, aab \mapsto 3, aaa \mapsto 4, ab \mapsto 5, aba \mapsto 6, b \mapsto 7\}$.
The set of accepting states is $A = \{3,4,5\}$ and the set of rejecting states is $R = \{6,7\}$. 

\begin{figure}[t] %
  \centering
  \begin{minipage}[b]{0.6\textwidth} 
    \centering
    \begin{subfigure}[b]{\textwidth}
    \centering
    \begin{tikzpicture}[
        node distance=2cm, 
        every state/.style={draw, thick, circle, minimum size=0.7cm}, 
        accepting/.style={double}, 
        false node/.style={draw, thick, rectangle, minimum size=0.6cm} 
      ]
      \node[state] (0) {0};
      \node[state] (1) [below right=0.2cm and 1cm of 0] {1};
      \node[false node] (7) [above right=0.2cm and 1cm of 0] {7};
      \node[state] (2) [below right=0.2cm and 1cm of 1] {2};
      \node[state,accepting] (5) [above right=0.2cm and 1cm of 1] {5};
      \node[state, accepting] (3) [below right=0.2cm and 1cm of 2] {3};
      \node[state, accepting] (4) [above right=0.2cm and 1cm of 2] {4};
      \node[false node] (6) [above right=0.2cm and 1cm of 5] {6};
      \path[->,>=Stealth,thick]
      (0) edge node [pos=0.5, above] {$a$} (1)
      (0) edge node [pos=0.5, above] {$b$} (7)
      (1) edge node [pos=0.5, above]  {$a$} (2)
      (1) edge node [pos=0.5, above] {$b$} (5)
      (2) edge node [pos=0.5, above]  {$b$} (3)
      (2) edge node [pos=0.5, above] {$a$} (4)
      (5) edge node [pos=0.5, above] {$a$} (6);
    \end{tikzpicture}
    \subcaption{The APTA consistent with the examples $S^{+}=\{aab,aaa,ab\}$ and $S^{-}=\{b,aba\}$}
    \label{subfig:example_apta}
    \end{subfigure}
  \end{minipage}
  \hfill
  \begin{minipage}[b]{0.39\textwidth} 
    \centering
    \begin{subfigure}[b]{\textwidth} 
      \centering
         \begin{tikzpicture}[>=stealth, thick, node distance=1.5cm, auto,
         every state/.style={thick, minimum size=2pt}
         ]
          \node[state, initial] (q0) {$q_0$};
          \node[state, accepting] (q1) [right=of q0] {$q_{1}$};
          
          \path[->,>=Stealth]
            (q0) edge [loop above] node {$b$} (q0)
            (q0) edge [above] node {$a$} (q1)
            (q1) edge [loop above] node {$a$,$b$} (q1); 
        \end{tikzpicture}
      \subcaption{Decomposed DFA $\A_{1}$}\label{subfig:dfa1}
    \end{subfigure}
    
    \begin{subfigure}[b]{\textwidth} 
      \centering
      \begin{tikzpicture}[>=stealth, thick, node distance=1.5cm, auto,
      every state/.style={thick, minimum size=2pt}]
          \node[state, initial] (q0) {$q_0$};
          \node[state, accepting] (q1) [right=of q0] {$q_{1}$};
          
          \path[->,>=Stealth]
            (q1) edge [bend left] node {$a$} (q0)
            (q0) edge [bend left] node {$a$,$b$} (q1)
            (q1) edge [loop above] node {$b$} (q1); 
        \end{tikzpicture}
      \subcaption{Decomposed DFA $\A_2$}\label{subfig:dfa2}
    \end{subfigure}
  \end{minipage}
  
  \caption{(a) The generated APTA, where an accepting node is represented by a double circle and a rejecting node is represented by a square. (b) and (c) show the two DFAs in a corresponding $(2,2)$-DFA decomposition $(\A_{1},\A_2)$ consistent with $S = (S^{+},S^{-})$.
  }
  \label{fig:aptasample}
\end{figure}
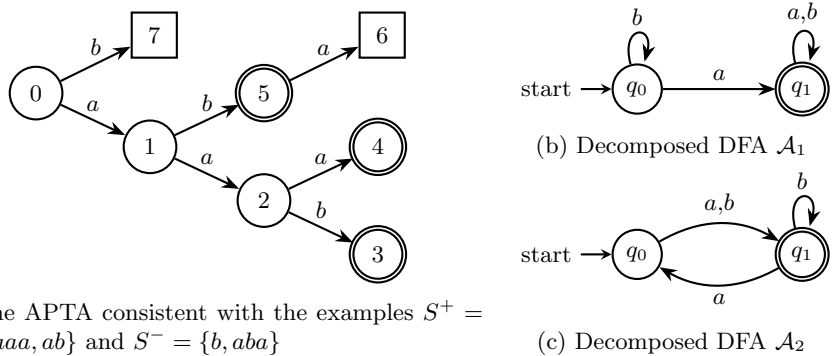


Given a positive integer $n \in \nats$, the state-of-the-art encoding method associates the APTA states with the states of each DFA $\A_{i}$ in the $(m_{1}, \dots, m_{n})$-DFA decomposition $(\A_{1}, \dots, \A_{n})$. 
More precisely, given an APTA consistent with the examples $S$ and an allocation of states $(m_{1}, \dots, m_{n})$, 
each unknown DFA $\A_{i}$ gets assigned $m_{i}$ states and 
the encoding associates each APTA state with one of the $\A_{i}$'s $m_{i}$ states, subject to a set of constraints, such as each accepting (rejecting, resp.) APTA state should be associated with an accepting (a rejecting, resp.) DFA state. 
APTA states with the same (DFA-indexed) state variable will be identified as the same state in the corresponding DFA. 
For instance, for each APTA state $v \in \nats_{S}$, a state variable $x_{v,i}^k$ represents it is associated with the $i$-th state in the $k$-th DFA. 
Considering the APTA in Fig.~\hyperref[subfig:example_apta]{1(a)}, if we would like to generate a $(2, 2)$-DFA decomposition $(\A_{1}, \A_2)$, we need $32$ state variables $x_{v,i}^k$. 
Besides these variables, the encoding method requires other variables to represent the transition relations and the accepting conditions.
Based on the coloring, several constraints imposed by \hyperref[itm:consistency]{\textbf{C1}} -- such as a positive example must be accepted by all DFAs and a negative example must be rejected by at least one DFA -- will be encoded into a SAT problem. 
The complete list of Lauffer's encoding~\cite{LaufferYVSS22} can be found in Appendix~\ref{sec:sat_encoding}. 
Fig.~\hyperref[subfig:dfa1]{1(b)} and~\hyperref[subfig:dfa2]{1(c)} present the identified $(2, 2)$-DFA decomposition $(\A_{1}, \A_2)$ consistent with the APTA in~Fig.~\hyperref[subfig:example_apta]{1(a)}.



\subsection{Our improved encoding via 3DFAs} \label{subsec:3dfa_encoding}
The existing approaches~\cite{HeuleV10,UlyantsevZS15,UlyantsevZS16,ZakirzyanovMIUM19,LaufferYVSS22} face a critical challenge: 
the size of the APTA grows dramatically with both the number and the length of examples $S$ as APTA associates each prefix in $\pref(S)$ with a unique node.
As mentioned earlier, the increase in the number of nodes in the APTA leads to more encoding variables, which in turn increases the size of the resulting SAT problem~\cite{Neider14,ZakirzyanovMIUM19}.
Consequently, the smaller the APTA size is, the easier the SAT problem will be.

In light of this, to obtain smaller APTAs, \cite{DellErbaLS24} proposed to construct the \emph{minimal} 3DFA for $S$ by merging equivalent nodes.
Moreover, inspired by~\cite{DaciukMWW00}, the minimal 3DFA can even be constructed incrementally from the set $S$ if the examples in $S$ are sorted by the standard lexicographical order.
Following~\cite{DaciukMWW00,DellErbaLS24}, we also propose an incremental construction of 3DFAs by merging equivalent nodes.
Differently from~\cite{DellErbaLS24}, our construction, rather than looking for the minimal 3DFA, requires that every rejecting word in $S^{-}$ must reach a unique state in the constructed 3DFA.
That is, we only merge equivalent accepting or ``don't-care'' states.
In what follows, we first introduce our construction of 3DFAs starting from APTAs and then the improved encoding via 3DFAs.
Afterwards, we reveal the reason why it is important to associate every rejecting word with a unique state in the 3DFA. 


\myparagraph{3DFA Construction}
For simplicity, we assume that the full APTA $\mathcal{P}$ of $S$ is given.
Note that our 3DFA can also be constructed on-the-fly from $S$ using the same techniques as~\cite{DaciukMWW00,DellErbaLS24}.
Since our goal is to associate each rejecting word with a unique state and merge as many other states as possible, our reduction process works in a backward manner as follows.

Initially, we collapse all accepting nodes without outgoing transitions into one representative state and store it in a hash map called \texttt{Register}.
Since all rejecting nodes are inequivalent to other states, each rejecting node will be stored as its own representative in the \texttt{Register}.  

Our reduction process then iteratively traverses the APTA nodes in a backward manner from leaves towards the root and process the 3DFA as follows.
In each iteration, we first collect the states whose all successors are representative states in \texttt{Register}, and then identify equivalent states with the following two conditions:
\begin{itemize}
\item 
    both states must be either accepting or don’t-care states, and
\item 
    for every input letter $a \in \Sigma$, either they both have no successors or both have the same successor in \texttt{Register}.
\end{itemize}
We thus create a state representative for each equivalent class, i.e., a set of equivalent states, and store the representative in \texttt{Register}.
Further, all states that have a representative in \texttt{Register} will be replaced by their representative in the updated 3DFA.

Our construction repeats the reduction process until all states, including the initial one, are processed, resulting in a 3DFA consistent with $S$.

\begin{restatable}{theorem}{restateDFACorrectness}\label{thm:dfacorrectness}
    Given a set of examples $S = (S^{+}, S^{-})$, the 3DFA construction produces a 3DFA consistent with $S$.
\end{restatable}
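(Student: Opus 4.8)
The plan is to show that the construction preserves consistency with $S$ at every step, so that the final 3DFA accepts all of $S^{+}$ and rejects all of $S^{-}$. The key observation is that the starting point --- the full APTA $\mathcal{P}$ of $S$ --- is already consistent with $S$ by its definition (each $u \in S^{+}$ reaches an accepting state $f(u) \in A$ and each $u \in S^{-}$ reaches a rejecting state $f(u) \in R$). Since the construction proceeds by a sequence of state merges (replacing equivalent states by a common representative), it suffices to prove that a single merge step preserves consistency; the result then follows by induction on the number of reduction iterations.

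For the single-step argument I would fix the invariant that the current 3DFA is consistent with $S$ and, crucially, that every word in $S^{-}$ still reaches a \emph{distinct} rejecting state. The two merge conditions guarantee both halves of this invariant. First, because only accepting-or-don't-care states are merged with each other (condition one), no rejecting state is ever absorbed into another class, so rejecting words keep reaching rejecting states and the distinctness of $S^{-}$ targets is maintained. Second, because merged states agree on all outgoing transitions into \texttt{Register} (condition two), the merge is a genuine bisimulation-style quotient: for any word $w$, the state reached in the merged automaton is exactly the representative of the state reached before merging. Hence the run of every $u \in S$ ends in a state of the same type ($+$, $-$, or $?$) as before, which is precisely consistency.

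The formal core is therefore a lemma stating that if $q$ and $q'$ satisfy the two equivalence conditions, then for every word $w$ the extended transition functions satisfy $\trans(q, w) \sim \trans(q', w)$ in the appropriate type-preserving sense, and merging them does not alter the acceptance/rejection outcome of any run originating at the initial state. I expect the main obstacle to be making this transition-agreement argument rigorous in the \emph{incremental, backward} setting: at the moment two states are compared, only their successors that already have representatives in \texttt{Register} are available, so I must argue that processing leaves-to-root guarantees all relevant successors are already finalized when a state is examined. Concretely, I would prove by induction on the (backward) processing order that whenever a state is collected for comparison, all of its successors are already representatives, so the two merge conditions indeed capture full language-type equivalence below that state --- closing the gap between the local merge test and the global consistency claim.
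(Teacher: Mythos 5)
Your proposal is correct and takes essentially the same route as the paper: an induction over the backward reduction steps with an invariant that states merged into the same representative are semantically indistinguishable, justified exactly by the two merge conditions and the leaves-to-root order ensuring successors are already finalized. The paper phrases the invariant as language equality of all nodes sharing a representative in \texttt{Register} (arguing letter-by-letter on $\epsilon$ and each $a$-successor), while you phrase it as a type-preserving quotient plus preservation of consistency; these are the same argument in different clothing.
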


Note that our 3DFA can also be constructed \emph{on the fly} from $S$ in the same manner as in~\cite{DellErbaLS24} if the example words are taken out from $S$ in the standard lexicographical order.

Observe that every rejecting word in $S^{-}$ leads the constructed 3DFA to a unique rejecting state by construction.
In fact, we can get the following stronger result, that we will use later in our SAT encoding phase.
\begin{restatable}{lemma}{restateLemma}\label{lem:prefix-unique}
    Let $\A = (\T, A, R)$ be the outcome of the 3DFA construction, where $\T = (\states, \init, \trans)$.
    For two different prefixes $u, u' \in \pref(S)$, we have $\trans(\init, u) \neq \trans(\init, u')$ if either $u \in \pref(S^{-})$ or $u' \in \pref(S^{-})$.
\end{restatable}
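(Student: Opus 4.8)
The plan is to reduce the statement to a purely structural fact about which APTA nodes get merged by the construction, and then prove that fact by a backward induction that measures how far a prefix is from reaching a rejecting node.

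First I would record the basic invariant that the run of the constructed 3DFA on any $u \in \pref(S)$ ends in the representative assigned to the APTA node $f(u)$; this follows directly from the construction, since every transition is redirected to the representative stored in \texttt{Register}. Consequently $\trans(\init, u) = \trans(\init, u')$ holds exactly when $f(u)$ and $f(u')$ are collapsed into the same representative. So it suffices to prove the following claim: no prefix $u \in \pref(S^{-})$ is ever merged with a distinct prefix $u' \in \pref(S)$. Since the lemma assumes at least one of $u, u'$ lies in $\pref(S^{-})$, this claim, applied to whichever of the two is in $\pref(S^{-})$, immediately yields $\trans(\init, u) \neq \trans(\init, u')$.

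To prove the claim I would induct on $d(u)$, the length of the shortest word $v$ with $uv \in S^{-}$, which is well defined for $u \in \pref(S^{-})$. In the base case $d(u) = 0$ we have $u \in S^{-}$, so $f(u)$ is a rejecting node; by construction every rejecting node is its own representative and is never merged, so the claim holds. For the inductive step with $d(u) = k > 0$, suppose for contradiction that $u$ is merged with some $u' \neq u$. The merging conditions force both nodes to be accepting or don't-care and, for every letter $a$, to share the same successor representative. Taking $a$ to be the first letter of a shortest witness $v$, the node $ua$ lies in $\pref(S^{-})$ with $d(ua) \leq k-1$, and the matching-successor condition forces $u'$ to have an $a$-successor $u'a$ merged with $ua$. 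Because concatenation is injective, $u \neq u'$ gives $ua \neq u'a$, so $ua$ is merged with a distinct prefix, contradicting the induction hypothesis.

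The one point that needs care, and which I expect to be the main obstacle, is the bookkeeping in the inductive step: I must verify that the merging criterion genuinely propagates down the $a$-edge, namely that $u'$ has an $a$-successor at all (the alternative of both having no $a$-successor is ruled out since $u$ does have one), and that the representative shared by $ua$ and $u'a$ is the same, so that the induction hypothesis applies to $ua$ with a strictly smaller distance. Everything else, the reduction to the merging fact via the run invariant, the base case from rejecting nodes being kept distinct, and the \emph{without loss of generality} choice of which prefix lies in $\pref(S^{-})$, is routine once this propagation is established.
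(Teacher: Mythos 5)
Your proof is correct, but it takes a genuinely different route from the paper's. The paper argues \emph{forward} by contradiction: assuming $\trans(\init,u)=\trans(\init,u')$ with $u\in\pref(S^{-})$, it picks $y$ with $uy\in S^{-}$, uses determinism to conclude $\trans(\init,u'y)=\trans(\init,uy)=r$ with $r$ rejecting, and then derives a contradiction from the construction's guarantee (via Theorem~\ref{thm:dfacorrectness}) that distinct rejecting words occupy distinct states. You instead argue \emph{backward} at the level of the merging rule itself: an induction on the distance $d(u)$ from $u$ to the nearest completion in $S^{-}$, showing that a merge of $f(u)$ with $f(u')$ would propagate down the first letter of a shortest witness to a merge of $f(ua)$ with $f(u'a)$ at strictly smaller distance, bottoming out at rejecting nodes, which are never merged. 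Your version is more elementary and self-contained: it needs only the run invariant and the merging criterion, not Theorem~\ref{thm:dfacorrectness}, and in particular it sidesteps the paper's step ``$u'y$ is also a rejecting word in $S^{-}$,'' which tacitly assumes the constructed 3DFA rejects \emph{only} words of $S^{-}$ --- a property slightly stronger than the consistency statement being cited. The price is the extra bookkeeping you correctly flag (existence of the $a$-successor of $u'$, injectivity of concatenation so that $ua\neq u'a$, and the fact that representatives are assigned once and are final), all of which goes through. The paper's argument is shorter once Theorem~\ref{thm:dfacorrectness} is in hand; yours is the more robust justification of the construction's key structural property.
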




\begin{figure}[t] 
  \begin{minipage}{0.5\textwidth}
  \centering 
  \begin{tikzpicture}[
        node distance=2cm, 
        every state/.style={draw, thick, circle, minimum size=0.7cm}, 
        accepting/.style={double}, 
        false node/.style={draw, thick, rectangle, minimum size=0.6cm} 
      ]
      \node[state] (0) {0};
      \node[state] (1) [below right=0.2cm and 1cm of 0] {1};
      \node[false node] (7) [above right=0.2cm and 1cm of 0] {7};
      \node[state] (2) [below right=0.2cm and 1cm of 1] {2};
      \node[state,accepting] (5) [above right=0.2cm and 1cm of 1] {5};
      \node[state, accepting, fill, color=blue!50] (3) [right= 0.9cm of 2] {\textcolor{black}{3}};
      \node[false node] (6) [above right=0.2cm and 1cm of 5] {6};
      \path[->,>=Stealth,thick]
      (0) edge node [pos=0.5, above] {$a$} (1)
      (0) edge node [pos=0.5, above] {$b$} (7)
      (1) edge node [pos=0.5, above]  {$a$} (2)
      (1) edge node [pos=0.5, above] {$b$} (5)
      (2) edge node [pos=0.5, above]  {$a,b$} (3)
      (5) edge node [pos=0.5, above] {$a$} (6);
    \end{tikzpicture}
  \caption{The 3DFA $\A = (\T, A, R)$ constructed from the APTA in Fig.\hyperref[subfig:example_apta]{1(a)}, where $A=\{3,5\}$ and $R=\{6,7\}.$}\label{fig:3dfa}
  \label{fig:3DFAtree}
  \end{minipage}
  \hfill
  \begin{minipage}{0.45\textwidth}
  \captionof{table}{Size comparison between 3DFAs and APTAs. ``Length'' indicates the length of each word.}
  \label{tab:sizeComparison}
    \centering
    \resizebox{\textwidth}{!}{
    \centering
    \setlength{\tabcolsep}{8pt}
    \begin{tabular}{c c c c}
        \toprule
        \multirow{2}{*}{$|\Sigma|$} & \multirow{2}{*}{Length} & \multicolumn{2}{c}{Automata Size}  \\
        \cmidrule{3-4}
        &  & \textbf{3DFA} & \textbf{APTA}  \\
        \midrule
        5 & 5  & 1,255 & 3,214 \\
        5 & 6  & 4,449 & 13,634 \\
        5 & 7  & 13,787 & 53,277 \\
        5 & 8  & 43,064 & 209,721 \\
        5 & 9  & 136,019 & 835,954 \\
        5 & 10 & 443,763 & 3,369,694 \\
        \bottomrule
    \end{tabular}
    }
  \end{minipage}
\end{figure}

For instance, as shown in Fig.~\ref{fig:3DFAtree}, this construction process can merge the states $\{3,4\}$ of the APTA in Fig.~\hyperref[subfig:example_apta]{1(a)} reached by positive examples into a representative state $3$. 
Although this simple example does not clearly show the advantage of the construction, Table~\ref{tab:sizeComparison} presents the size comparison between APTAs and our constructed 3DFAs in terms of the number of states on a number of cases from parity game solving~\cite{DellErbaLS24}. 
We can see that the resulting 3DFAs exhibit significantly smaller sizes than the original APTAs. 
Therefore, using 3DFAs instead of APTAs requires dramatically fewer variables and fewer constraints in our improved encoding given below, yielding easier SAT problems and thus faster solving speed in general than those via APTAs, as confirmed by the experiments in Section~\ref{sec:experiments}.





\myparagraph{SAT Encoding via 3DFAs}
We now present the standard SAT encoding for solving the DFA-DIP problem~\cite{LaufferYVSS22} adapted to our 3DFAs. 
Note that we also use the symmetry breaking techniques proposed in~\cite{UlyantsevZS15,UlyantsevZS16} to further improve our SAT encoding, just like in~\cite{LaufferYVSS22}. In what follows, we focus on our major encoding. 
Given the 3DFA $\A = (\T, A, R)$ with $\T = (\states, \init, \trans)$ consistent with $S$, we are looking for a consistent DFA decomposition $(\A_{1}, \dots, \A_{n})$ of $S$ where the state allocation is $(m_{1}, \dots, m_{n})$ for a given $n \in \nats$.
To encode the DFA-DIP, we use the following three types of variables:
\begin{enumerate}
\item 
    \textbf{State variables} $x_{v,i}^{k}$, where $k \in [n]$, $v \in \states$, and $i \in [m_{k}]$. 
    $x_{v,i}^{k} \equiv 1$ iff the state $v$ of the 3DFA and the state $i$ in the DFA $\A_{k}$ can both be reached on some word $u \in \Sigma^{*}$ from their initial states.
\item 
    \textbf{Transition relation variables} $e^{k}_{l,i,j}$, where $ k \in [n]$, $l \in \Sigma$, and $ i,j \in [m_{k}]$. 
    $e^{k}_{l,i,j} \equiv 1$ iff DFA $\A_{k}$ has a transition from state $i$ to state $j$ over the letter $l$.
\item 
    \textbf{Acceptance variables} $z_{i}^{k}$, where $k \in [n]$ and $i \in [m_{k}]$. 
    $z_{i}^{k} \equiv 1$ iff the state $i$ of DFA $\A_{k}$ is an accepting state.
\end{enumerate}

Recall that $A$ (resp., $R$) is the set of accepting (resp., rejecting) states in the 3DFA $\A$.
We now give the list of constraints for the SAT encoding.
First of all, we require that each individual automaton $\A_{i}$ should be a DFA.
That is, $\A_{i}$ must be deterministic and also complete.
\begin{description}
\item[\namedlabel{constraint:D1}{D1 Determinism.}] 
    For a state $i$ and a letter $l$  in $\A_{k}$, there is at most one successor:
    \[
        \bigwedge_{l \in \Sigma} \bigwedge_{k \in [n]} \bigwedge_{\substack{i,j,t \in [m_{k}] \\ j < t}} e^{k}_{l,i,j} \implies \neg e^{k}_{l,i,t}.
    \]
\item[\namedlabel{constraint:D2}{D2 Completeness.}] 
    For a state $i$ and a letter $l$ in $\A_{k}$, there must be a successor: 
    \[
        \bigwedge_{l \in \Sigma} \bigwedge_{k \in [n]} \bigwedge_{i \in [m_{k}]} \bigvee_{j \in [m_{k}]} e^{k}_{l,i,j}.
    \]
\end{description}

Second, we require that the DFA decomposition $(\A_{1}, \cdots, \A_{n})$ should be consistent with $S$, so to satisfy \hyperref[itm:consistency]{\textbf{C1}}.
\begin{description}
\item[\namedlabel{constraint:R1}{R1 Positive Consistency.}] 
    Every positive example in $S^{+}$ must be accepted by all individual DFAs, i.e., by each DFA $\A_{k}$ where $k \in [n]$. It follows that, if $x^k_{v,i} \equiv 1$, then all positive examples leading the 3DFA $\A$ to an accepting state $v$ must also make $\A_{k}$ reach an accepting state, i.e., state $i$ must also be an accepting state.
    \[
        \bigwedge_{v \in A} \bigwedge_{k \in [n]} \bigwedge_{i \in [m_{k}]} x^{k}_{v,i} \implies z^{k}_{i}.
    \]
\item[\namedlabel{constraint:R2}{R2 Negative Consistency.}] 
    Each negative example in $S^{-}$ should be rejected by at least one individual DFA $\A_{k}$. That is, if $x^k_{v,i} \equiv 1$, then the example leading the 3DFA $\A$ to a rejecting state $v$ must also make $\A_{k}$ reach a rejecting state as well, i.e., state $i$ of $\A_{k}$ should also be rejecting.
    \[
        \bigwedge_{v \in R} \bigvee_{k \in [n]} \big(\bigwedge_{i \in [m_{k}]} \left( x_{v,i}^{k} \implies \neg z_{i}^{k} \right)\big).
    \]
\end{description}

To further enforce the DFA decomposition to be consistent with $S$, we also need to perform the product of the 3DFA $\A$ and each individual DFA $\A_{k}$ in order to build the transition system of each $\A_{k}$, where $k \in [n]$.
That is, $x^k_{v,i} \equiv 1$ also means that in the product automaton of $\A$ and $\A_{k}$, the pair of states $(v, i)$ is seen as a product state and it is reached from the initial product state $(r, 1)$ over some word $u \in \Sigma^{*}$. 
The constraints are listed below:
\begin{description}
\item[\namedlabel{constraint:T1}{T1 Initialization.}] 
    The initial state $r$ of 3DFA $\A$ should always be associated with the initial state $1$ for each DFA $\A_{k}$: 

    \[
        \bigwedge_{k \in [n]} x^k_{r, 1}
    \]
\item[\namedlabel{constraint:T2}{T2 State Correspondence.}] 
    Each state $v$ of 3DFA $\A$ must be associated with one state in each DFA $\A_{k}$, i.e., $x^k_{v,i} \equiv 1$ for some state $i$ in $\A_{k}$: 
    \[
        \bigwedge_{v \in Q} \bigwedge_{k \in [n]} \bigvee_{i \in [m_{k}]} x^{k}_{v,i}.
    \]
\item[\namedlabel{constraint:T3}{T3 Transition Relation.}]
     For each $k\in [n]$, in the product automaton of $\A$ and $\A_{k}$, if the product state $(v, i)$ is reachable from the initial product state $(r, 1)$, and there is a transition from state $i$ to $j$ over letter $a$ in $\A_{k}$, then the product state $(\delta(v, a), j)$ is also reachable. 
    \[
        \bigwedge_{v \in Q} \bigwedge_{k \in [n]} \bigwedge_{i,j \in [m_{k}]} \bigwedge_{a \in l(v)} (x^{k}_{v,i} \land e^{k}_{a,i,j}) \implies x^{k}_{\delta(v, a),j} 
    \]
    where $l(v)$ is the set of letters on the outgoing transitions from state $v$.
\end{description}


In addition to the above constraints, Lauffer \emph{et al.}~\cite{LaufferYVSS22} also proposed an optimization as follows. 
\begin{description}
\item[\namedlabel{constraint:O1}{O1}] 
    Each state $v$ in the 3DFA can only be associated with at most one state $i$ in each individual DFA $\A_{k}$. 
    \[
         \bigwedge_{v \in Q} \bigwedge_{k \in [n]} \bigwedge_{i,j \in [m_{k}], i < j} ( \neg x^{k}_{v,i} \lor \neg x^{k}_{v,j}). 
    \]
\end{description}
This constraint holds in the APTA because each state $v$ corresponds to a unique prefix $u \in \pref(S)$.
Since both $\A$ and $\A_{k}$ are deterministic, the product state $(v, i)$ reached from the initial product state $(r, 1)$ over $u$ must be unique.
That is, there will be a unique $i$ in $\A_{k}$ to make $x^k_{v,i}$ be true.

However, this constraint might not be true when we use a 3DFA for our encoding.
For instance, consider the 3DFA $\A$ in Fig.~\ref{fig:3dfa}:
we can see that there are two words that can make $\A$ reach state $3$, namely $aaa$ and $aab$.
In our target DFA $\A_{k}$, these two words might lead to two different states, say $i$ and $j$.
It then follows that we have both $x^k_{3, i} \equiv 1$ and $x^k_{3, j} \equiv 1$, which obviously violates the constraint \hyperref[constraint:O1]{\textbf{O1}}.
This will prevent us from finding a consistent DFA decomposition $(\A_{1}, \dots, \A_{n})$ with the state allocation $(m_{1}, \dots, m_{n})$.

Nonetheless, we observe that during the construction of the 3DFA, it is easy to identify those states that correspond to multiple different prefix words in $\pref(S)$.
Indeed, we can just record the representative states which correspond to multiple equivalent states, denoted by $M$.
That is, for each state $v \in M$, there must be two different prefixes $u, u' \in \pref(S)$ such that $v = \delta(r, u) = \delta(r, u')$.
In fact, $M$ must not contain the states corresponding to a prefix in $\pref(S^-)$ as guaranteed by Lemma~\ref{lem:prefix-unique}.
The set of states in $\A$ that is associated with a unique prefix word in $\pref(S)$ is $Q\setminus M$.
Therefore, we can replace the constraint \hyperref[constraint:O1]{\textbf{O1}} with the following constraint:
\begin{description}
\item[\namedlabel{constraint:O1prime}{O1'}] 
    Each state $v$ in the 3DFA that does not correspond to multiple equivalent states can only be associated with at most one state $i$ in each individual DFA $\A_{k}$.
    \[
        \bigwedge_{v \in Q\setminus M} \bigwedge_{k \in [n]} \bigwedge_{i,j \in [m_{k}], i < j} (\neg x^{k}_{v,i} \lor \neg x^{k}_{v,j}).
    \]
\end{description}
Clearly, by Lemma~\ref{lem:prefix-unique}, all states having a path to a rejecting state in $\A$ belong to $Q \setminus M$. 

Let $\varphi^{\A, n}_{(m_{1}, \dots, m_{n})}$ be the conjunction of the constraints \hyperref[constraint:D1]{\textbf{D1}}-\hyperref[constraint:D2]{\textbf{2}}, \hyperref[constraint:R1]{\textbf{R1}}-\hyperref[constraint:R2]{\textbf{2}}, \hyperref[constraint:T1]{\textbf{T1}}-\hyperref[constraint:T3]{\textbf{3}}, and \hyperref[constraint:O1prime]{\textbf{O1'}}.
Then, we get the following result.
\begin{restatable}{theorem}{restateEncodingCorrectness}\label{thm:encoding_correctness}
    Let $\A$ be the 3DFA consistent with the examples $S$ and $n \in \nats$. 
    $\varphi^{\A, n}_{(m_{1}, \dots, m_{n})}$ is satisfiable if, and only if, there exists a $(m_1,\dots,m_n)$-DFA decomposition $(\A_{1}, \dots, \A_{n})$ consistent with $S$.
\end{restatable}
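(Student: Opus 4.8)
The plan is to prove the two implications of the equivalence separately, using in both the same dictionary between SAT assignments and decompositions: from an assignment, read off $\A_{k}$ with state set $[m_{k}]$, initial state $1$, transition $\trans_{k}(i,a)=j$ iff $e^{k}_{a,i,j}\equiv 1$, and accepting states $\{\,i\mid z^{k}_{i}\equiv 1\,\}$; conversely any DFA yields such $e,z$. Constraints \textbf{D1} and \textbf{D2} are exactly what make each $\A_{k}$ deterministic and complete, so this dictionary is well defined in both directions. The workhorse for linking the 3DFA $\A$ to the $\A_{k}$ will be a \emph{product-reachability} claim: for every $u\in\pref(S)$ with $\trans(\init,u)=v$, one has $x^{k}_{v,\trans_{k}(1,u)}\equiv 1$.

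For the soundness direction (SAT $\Rightarrow$ decomposition), I would build $\A_{1},\dots,\A_{n}$ as above and prove the product-reachability claim by induction on $|u|$. The base case $u=\emptyword$ is \textbf{T1}. For the step $u=u'a\in\pref(S)$, set $v'=\trans(\init,u')$ and $i=\trans_{k}(1,u')$; the inductive hypothesis gives $x^{k}_{v',i}$, and since $u'a\in\pref(S)$ the APTA edge $u'\xrightarrow{a}u'a$ survives the construction, so $a\in l(v')$ and \textbf{T3} propagates $x^{k}_{v',i}\wedge e^{k}_{a,i,\trans_{k}(i,a)}\Rightarrow x^{k}_{\trans(v',a),\trans_{k}(1,u)}$. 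Consistency \textbf{C1} then follows: for $u\in S^{+}$ we have $v\in A$, so \textbf{R1} forces $z^{k}_{\trans_{k}(1,u)}$ for every $k$ (all DFAs accept $u$); for $u\in S^{-}$ we have $v\in R$, and \textbf{R2} yields an index $k$ with $\neg z^{k}_{\trans_{k}(1,u)}$ (some DFA rejects $u$).

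For completeness (decomposition $\Rightarrow$ SAT), I would take a consistent decomposition, define $e,z$ from the $\A_{k}$, and set $x^{k}_{v,i}\equiv 1$ iff some $u\in\pref(S)$ satisfies $\trans(\init,u)=v$ and $\trans_{k}(1,u)=i$. Then \textbf{T1} holds via $\emptyword$, and \textbf{T2} holds because every state of $\A$ is the representative of a nonempty class of prefixes. The step to check with care is \textbf{T3}: if $x^{k}_{v,i}$ is witnessed by $u$ and $a\in l(v)$, I must know that the \emph{same} witness extends, i.e.\ $ua\in\pref(S)$; this is where the merging invariant enters, since two prefixes are collapsed only when they share their outgoing letters, so every prefix mapped to $v$ has an $a$-successor whenever $v$ does. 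Finally \textbf{O1'} holds because each $v\in Q\setminus M$ is reached by a unique prefix, forcing a unique $i$ with $x^{k}_{v,i}$.

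The crux of both directions—and the step I expect to be the main obstacle—is establishing that the labels of $\A$ match $S$ on reachable prefixes, so that the quantifications over $A$ and $R$ in \textbf{R1}/\textbf{R2} faithfully encode \textbf{C1}. Concretely, I must argue that a prefix reaches an accepting state \emph{iff} it is a positive example (the construction coalesces only equally labelled states, so an accepting state collects exactly positive prefixes), and, by Lemma~\ref{lem:prefix-unique}, that each rejecting state is reached by a \emph{single} negative example. Granting this, completeness closes cleanly: in \textbf{R1}, any $u$ with $x^{k}_{v,i}$ and $v\in A$ is a positive example, hence accepted, giving $z^{k}_{i}$; in \textbf{R2}, the unique negative example reaching $v\in R$ is rejected by some $\A_{k}$, and since no other prefix reaches $v$ the inner conjunct $\bigwedge_{i}(x^{k}_{v,i}\Rightarrow\neg z^{k}_{i})$ is witnessed by that one index. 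Once this label-correspondence and the merging invariant are nailed down, the remaining verifications (\textbf{D1}, \textbf{D2}, \textbf{T1}, \textbf{T2}) are routine bookkeeping.
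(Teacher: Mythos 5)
Your proposal is correct, but it takes a genuinely different route from the paper. The paper's proof is a short reduction: it \emph{assumes} the correctness of the original APTA-based encoding of Lauffer et al.\@ and argues only that the two modifications preserve it --- citing Theorem~\ref{thm:dfacorrectness} for the soundness of substituting the 3DFA for the APTA, and Lemma~\ref{lem:prefix-unique} for the soundness of weakening \hyperref[constraint:O1]{\textbf{O1}} to \hyperref[constraint:O1prime]{\textbf{O1'}}. You instead give a self-contained soundness/completeness argument: the dictionary between assignments and decompositions, the product-reachability invariant proved by induction on prefix length for the forward direction, and the explicit verification of each constraint for the converse. Your version is more rigorous and makes visible exactly where each ingredient is used --- in particular that Lemma~\ref{lem:prefix-unique} is what rescues \hyperref[constraint:R2]{\textbf{R2}} in the completeness direction (the unique negative prefix reaching a rejecting state witnesses the inner conjunction), and that the merging condition ``same outgoing letters'' is what makes \hyperref[constraint:T3]{\textbf{T3}} checkable with a single witness prefix; these are precisely the points the paper's Section~\ref{subsec:encoding_problem_negative} discusses informally but its proof glosses over. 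The paper's reduction buys brevity and reuse of prior work; your argument buys independence from the correctness claim of~\cite{LaufferYVSS22} and would also be the natural template if one wanted to formalize the result. The two fine points you flag (the label-correspondence between 3DFA states and example polarity, and the invariant that merged nodes share their sets of outgoing letters) both follow from the construction and the invariant established in the proof of Theorem~\ref{thm:dfacorrectness}, so nothing is missing.
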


\myparagraph{Encoding size}
The size of the formula $\varphi^{\A, n}_{(m_{1}, \dots, m_{n})}$ is determined by the conjunction of each constraint above. 
Let $m$ be the maximal number in $(m_{1}, \dots, m_{n})$. 
The contributions of each constraint to the formula size are as follows: 
(\hyperref[constraint:D1]{\textbf{D1}}-\hyperref[constraint:D2]{\textbf{2}}): $O(|\Sigma| \cdot n \cdot m^{3}) + O(|\Sigma| \cdot n \cdot m^{2})$; 
(\hyperref[constraint:R1]{\textbf{R1}}-\hyperref[constraint:R2]{\textbf{2}}): $O(|Q^{+}| \cdot n \cdot m) + O(|Q^{-}| \cdot n \cdot m)$; 
(\hyperref[constraint:T1]{\textbf{T1}}-\hyperref[constraint:T3]{\textbf{3}}): $O(n) + O(|Q| \cdot n \cdot m) + O(|Q| \cdot n \cdot m^{2})$;
and 
(\hyperref[constraint:O1prime]{\textbf{O1'}}): $O(|Q \setminus M| \cdot n \cdot m^{2})$.
The overall size is thus $O(|\Sigma| \cdot n \cdot m^{3}) + O(|Q| \cdot n \cdot m^{2})$. 
$\hfill \lhd$




\subsection{Why not use minimal 3DFAs} \label{subsec:encoding_problem_negative}

Our 3DFA construction is inspired by the one of \texttt{DFAMiner}~\cite{DellErbaLS24}, with the main difference being that we do not generate the minimal 3DFA recognizing $S$.
In the minimal 3DFA generated by \texttt{DFAMiner}, all rejecting states that cannot reach a rejecting state are merged together.
For instance, in Fig.~\ref{fig:3dfa}, both rejecting states $6$ and $7$ do not reach another rejecting state, so in the minimal 3DFA, these two states will be merged.
This also means that the two rejecting words $b$ and $aba$ will correspond to the same state, say $v$, in the minimal 3DFA $\A$.
However, this would make the meaning of constraint~\hyperref[constraint:R2]{\textbf{R2}} imprecise:
by constraint~\hyperref[constraint:R2]{\textbf{R2}}, we want every word to be rejected by at least one individual DFA, but the constraint actually says that we have at least one individual DFA that rejects a word associated with the rejecting state $v$.
This means that it is \emph{not} guaranteed that every word in $S^{-}$ reaching $v$ will be rejected.
This means, for instance, that for $b$ and $aba$ in Fig.~\ref{fig:3dfa}, it may happen that only $b$ is rejected by some individual DFA but $aba$ is totally ignored and thus not rejected by any DFA.
This kind of issues can lead to inconsistent DFA decompositions from the SAT solver and we indeed observed them in the experiments when we directly used the minimal 3DFA in our encoding.

In light of this, it is clear why we require in our construction that every rejecting word must be associated with a unique state in the constructed 3DFA or, equivalently, that every rejecting state in the APTA must not be equivalent to any other states.
We are now ready to introduce in the next section the overall algorithm for finding consistent DFA decompositions.

\section{Identification Algorithms for DFA Decomposition}
\label{sec:overall-algo}

In this section, we present identification algorithms for two DFA decomposition problems: 
the Pareto-optimal DIP and the states-optimal DIP. 

For the Pareto-optimal DIP, our algorithm builds on top of the approach of~\cite{LaufferYVSS22}, where we replace their SAT encoding via APTA with our enhanced 3DFA-based encoding. 
The detailed algorithm can be found in Appendix~\ref{sec:sat_encoding}. 
As the experimental results in Section~\ref{sec:experiments} show, this modification yields significant improvement. 
Therefore, in what follows, we focus on our method for solving the new states-optimal DIP, which is summarized as Algorithm~\ref{alg:states_optimal_DIP}.


\begin{algorithm}[t]
    \caption{State-optimal DIP Solving}
    \label{alg:states_optimal_DIP}
    \SetKwInput{Input}{Input}\SetKwInOut{Output}{Output}\SetNoFillComment
    \Input{The labeled examples $S=\{S^{+},S^{-}\}$}
    \Output{A DFA decomposition $\mathcal{D}$ for the states-optimal DIP}
        $\A \gets \proc{3DFAConstruction}(S)$\tcp*{Construct a 3DFA from the examples} \label{alg1_line:3DFA}
        $N \gets 2$\tcp*{Initial total number of states in possible decompositions} \label{alg1_line:initial}
        \While{\True}{
            $\mathcal{M} \gets \proc{ComputeStatesAllocations}(N, 2)$\tcp*{Get all possible states allocations under $N$} \label{alg1_line:computeCombinations}
            \ForEach{$(m_{1}, \dots, m_{n}) \in \mathcal{M}$}{ 
                $\mathit{SAT}, \mathcal{D} \gets \proc{Solve}((m_{1}, \dots, m_{n}), \A)$\; \label{alg1_line:solve}
                \If{$\mathit{SAT}$}{
                    \Return $\mathcal{D}$\; \label{alg1_line:solution}
                }
            }
            $N \gets N+1$\; \label{alg1_line:increase_{n}}
        }
\end{algorithm}

Given a set of labeled examples $S = (S^{+}, S^{-})$, Algorithm~\ref{alg:states_optimal_DIP} first builds a 3DFA consistent with $S$ (Line~\ref{alg1_line:3DFA}) according to Section~\ref{subsec:3dfa_encoding} and then looks for the decomposition with the minimal total number of states, starting with $N = 2$ (Line~\ref{alg1_line:initial}) and incrementing it until a suitable decomposition is found. 
This is achieved by computing all possible states allocations $\mathcal{M}$ (Line~\ref{alg1_line:computeCombinations}) having $N$ total states by calling Algorithm~\ref{alg:computeDFAcombinations}, which ensures that every states allocation $(m_{1}, \dots, m_{n}) \in \mathcal{M}$ satisfies: 
(1) the total number of states is $N$ (i.e., $\sum_{i=1}^{n} m_{i} = N$),  
(2) every corresponding DFA has at least $2$ states\footnote{As the constraint \hyperref[constraint:D2]{\textbf{D2}} in Section~\ref{subsec:3dfa_encoding} requires that every state in the generated DFAs should have a transition for every action in $\Sigma$, the generated DFAs are complete DFAs. Since a complete DFA with a single state will either accept or reject all words, we skip such na\"ive DFAs and thus every DFA should have at least 2 states.} (i.e., $m_{i} \geq 2$ for all $1 \leq i \leq n$), and
(3) allocated states are in ascending order (i.e., $m_{i} \leq m_{i+1}$ for all $1 \leq i < n$). 
As a result, Algorithm~\ref{alg:computeDFAcombinations} enumerates all possible combinations where the number of DFAs $n$ will range from $1$ to $\left\lfloor \frac{N}{2} \right\rfloor$. 
For instance, for $N = 10$, Algorithm~\ref{alg:computeDFAcombinations} returns the following states allocations:
\vspace{-0.2cm}
\[
\begin{array}{c}
\{
(2,2,2,2,2), (2,2,3,3), (2,2,2,4), (3,3,4), (2,4,4), (2,3,5), \\ (2,2,6), (5,5), (4,6), (3,7), (2,8), (10)
\}.
\end{array}
\vspace{-0.2cm}
\]
After calling Algorithm~\ref{alg:computeDFAcombinations}, for every possible states allocation $(m_{1}, \dots, m_{n}) \in \mathcal{M}$, Algorithm~\ref{alg:states_optimal_DIP} applies our encoding method and calls a SAT solver to determine whether there is a $(m_{1}, \dots, m_{n})$-DFA decomposition $\mathcal{D}$ for the state-optimal DIP (Line~\ref{alg1_line:solve}). 
If so, it returns it (Line~\ref{alg1_line:solution}). 
Otherwise, it means that there is no solution under the current total number of states $N$, so we increase it (Line~\ref{alg1_line:increase_{n}}). 

Note that, in Algorithm~\ref{alg:states_optimal_DIP}, we increase the total number of states $N$ by 1 in each round. 
For every fixed $N$, Algorithm~\ref{alg:computeDFAcombinations} sorts the states allocations in descending order based on their entropy values (Line~\ref{alg2_line:sorted_by_entropy}). 
Therefore, they work together to keep the states-optimal preorder defined in Section~\ref{sec:problems}. Theorem~\ref{thm:algo_termination_and_correctness} shows the termination and the correctness of Algorithm~\ref{alg:states_optimal_DIP}. 



\begin{algorithm}[t]
    \caption{\proc{ComputeStatesAllocations}}
    \label{alg:computeDFAcombinations}
    \SetKwInput{Input}{Input}\SetKwInOut{Output}{Output}\SetNoFillComment
    \Input{$N$: the total number of states in unknown decompositions \\ $\qquad\quad\;\,$ $k$: the minimum number of states in each DFA}
    \Output{$\mathcal{M}$: all possible states allocations of unknown decompositions}
        $\mathcal{M} \gets \{N\}$\;
        \For(\tcp*[f]{Current DFA has $m$ states}){$m \gets k$ \KwTo $N$}{
            \If(\tcp*[f]{Still enough states left for the next DFA}){$m  \leq N - m$}{
                $\mathcal{M}' \gets \proc{ComputeStatesAllocations}(N - m, m)$\;
                $\mathcal{M} \gets \mathcal{M} \cup (\{m\} \times \mathcal{M}')$\tcp*{Add the new states allocations to $\mathcal{M}$}
            }
        }
        Sort $\mathcal{M}$ in descending order by entropy (see Definition~\ref{def:entropy})\; \label{alg2_line:sorted_by_entropy}
        \Return $\mathcal{M}$\;
\end{algorithm}

\begin{restatable}[Termination and Correctness]{theorem}{restateTerminationCorrectness} \label{thm:algo_termination_and_correctness}
Let $S = (S^{+}, S^{-})$ be a given set of labeled examples. 
Algorithm~\ref{alg:states_optimal_DIP} terminates and returns a correct DFA decomposition $\mathcal{D} = (\A_{1}, \dots, \A_{n})$ for the states-optimal DIP.
\end{restatable}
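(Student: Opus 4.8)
The plan is to prove Theorem~\ref{thm:algo_termination_and_correctness} by establishing three things: (i) there always exists \emph{some} consistent DFA decomposition, so the search space is nonempty; (ii) Algorithm~\ref{alg:states_optimal_DIP} terminates; and (iii) the decomposition it returns is minimal with respect to the states-optimal preorder $\lessdot$. For (i), I would argue that a single DFA consistent with $S$ always exists --- for example, the minimal DFA separating $S^{+}$ from $S^{-}$, whose existence is classical for finite consistent sample sets --- so in the worst case a $(1)$- or (after the completeness adjustment to at least two states) a $(2)$-style monolithic solution is available. Since the identification of a monolithic DFA is the special case $n=1$ of DFA-DIP, this guarantees that for a sufficiently large total number of states $N$ the set $\mathcal{M}$ returned by Algorithm~\ref{alg:computeDFAcombinations} contains an allocation for which $\varphi^{\A,n}_{(m_{1},\dots,m_{n})}$ is satisfiable.

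For termination (ii), I would combine the previous observation with the correctness of the SAT encoding. By Theorem~\ref{thm:encoding_correctness}, \proc{Solve} reports \textit{SAT} on an allocation $(m_{1},\dots,m_{n})$ exactly when a consistent $(m_{1},\dots,m_{n})$-DFA decomposition exists. Let $N^{\star}$ be the total number of states of \emph{any} fixed consistent decomposition (which exists by step (i)). For the outer loop value $N = N^{\star}$, Algorithm~\ref{alg:computeDFAcombinations} is called with parameters $(N^{\star},2)$; I would verify that its recursive enumeration produces every allocation $(m_{1},\dots,m_{n})$ with $\sum_i m_i = N^{\star}$, $m_i \ge 2$, and $m_1 \le \cdots \le m_n$. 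The reference decomposition's allocation (after merging away any trivial single-state DFAs into the at-least-two-states regime) is among them, so \proc{Solve} returns \textit{SAT} and the algorithm halts no later than round $N = N^{\star}$. Because $N$ increases by exactly one each round starting from $2$, only finitely many rounds occur before this bound is reached.

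For correctness (iii), the key is to show the returned decomposition $\mathcal{D}$ is $\lessdot$-minimal. I would unfold the definition of $\lessdot$: a decomposition is preferred first by smaller total number of states, and among equal totals by \emph{higher} entropy $\entropy$. The outer loop increments $N$ from $2$ upward and returns on the \emph{first} $N$ admitting any solution, so the returned $\mathcal{D}$ has the minimum possible total number of states $N_{\min}$; no consistent decomposition with strictly fewer states exists, for otherwise an earlier round would have terminated (here I again invoke Theorem~\ref{thm:encoding_correctness} together with the completeness of the enumeration in Algorithm~\ref{alg:computeDFAcombinations}). Within the round $N = N_{\min}$, the \textbf{foreach} loop iterates over $\mathcal{M}$ in the order produced by Algorithm~\ref{alg:computeDFAcombinations}, which sorts allocations in \emph{descending} entropy order (Line~\ref{alg2_line:sorted_by_entropy}); returning on the first satisfiable allocation therefore yields the consistent decomposition of total size $N_{\min}$ with the largest entropy, which is precisely the $\lessdot$-minimal element. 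Consistency of $\mathcal{D}$ with $S$ is immediate from the ``only if'' direction of Theorem~\ref{thm:encoding_correctness}.

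I expect the main obstacle to be the bookkeeping in step (iii): one must carefully match the two-level tie-breaking in the definition of $\lessdot$ against the two-level control flow of the algorithm (outer loop on $N$ for the total-states criterion, inner sorted loop for the entropy criterion), and in particular confirm that Algorithm~\ref{alg:computeDFAcombinations} enumerates \emph{all} admissible allocations for a given $N$ so that no smaller-or-equal element of the preorder is skipped. A secondary subtlety is reconciling the ``at least two states per DFA'' restriction (justified in the footnote, since complete one-state DFAs are trivial) with the existence argument in step (i); I would handle this by noting that any trivial single-state component either accepts everything (and is removable without changing the language) or rejects everything (making $S^{+}$ empty, a degenerate case), so restricting to $m_i \ge 2$ loses no genuine solutions.
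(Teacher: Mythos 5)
Your proof is correct and follows essentially the same structure as the paper's: termination via the existence of a consistent decomposition that bounds the outer-loop value $N$, and $\lessdot$-minimality from the fact that the outer loop enumerates total state counts in increasing order while Algorithm~\ref{alg:computeDFAcombinations} enumerates all admissible allocations for each $N$ sorted by descending entropy, with consistency delegated to Theorem~\ref{thm:encoding_correctness}. The only difference is the termination witness: the paper exhibits an explicit consistent decomposition with $N = 2 + \sum_{u \in S^{-}}(|u|+2)$ states (a 2-state accept-all DFA plus one DFA per negative example), whereas you invoke the classical existence of a monolithic DFA consistent with a finite sample; both yield the required finite bound.
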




\myparagraph{Complexity analysis} 
It can be observed that there is always a correct decomposition $\mathcal{D} = (\A_{1}, \dots, \A_{n})$ when $N = 2 + \sum_{u \in S^{-}}{(|u|+2)}$, where a 2-states DFA $\A_{1}$ accepts all words and every other DFA rejects one negative example $u \in S^{-}$.
Therefore, the loop iterations in Algorithm~\ref{alg:states_optimal_DIP} are bounded by $1 + \sum_{u \in S^{-}} {(|u|+2)}$, as $N$ starts with $2$ and it is increased by $1$ every round. 
In every loop, the recursive generation of all possible states allocations under a given $N$ in Algorithm \ref{alg:computeDFAcombinations} is dominated by the integer partition problem. According to the Hardy-Ramanujan formula\cite{Hardy}, the asymptotic estimation of the number of states allocations is $|\mathcal{M}| \sim \frac{e^{\pi\sqrt{2N/3}}}{4N\sqrt{3}} $, which shows that the growth of $|\mathcal{M}|$ is subexponential in $N$. $\hfill \lhd$





It is not hard to have a simple variant of Algorithm~\ref{alg:states_optimal_DIP} with a given integer $n$ restricting the number of DFAs in the decomposition. 
It can be done by checking if a combination has at most $n$ DFAs before adding it to $\mathcal{M}$ in Algorithm~\ref{alg:computeDFAcombinations}.

\section{Experimental Evaluation}
\label{sec:experiments}

We have implemented our approach\footnote{Available at: \url{https://github.com/MJJ-Shuai/dfa_decomposed_3DFA}} on top of the tool developed in~\cite{LaufferYVSS22} and we used as benchmarks pairs of sets of examples representing partially-ordered tasks (cf.~\cite[Section~III.C]{LaufferYVSS22}), where we vary the numbers of tasks, the maximum length of each sequence of tasks, and the number of samples in each positive and negative sets; 
for each combination, we randomly generate 10 instances, provided they can be generated,  by following the generation strategy\footnote{See \url{https://github.com/mvcisback/dfa-identify}} in~\cite{LaufferYVSS22}.
Let \toolOriginal, \toolLaufferOurEncoding, and \toolOurStateOptimalDIP denote the original tool from~\cite{LaufferYVSS22}, its version using our 3DFA encoding (cf.~Section~\ref{sec:encoding}), and by our method for solving states-optimal DIP (cf.~Section~\ref{sec:overall-algo}), respectively.
We ran the tools on a desktop machine with an i7-4790 CPU and 16 GB of memory running Ubuntu Server 24.04.2 and we used \benchexec~\cite{DBLP:journals/sttt/BeyerLW19} to trace and constrain the tools' executions: 
we allowed each benchmark to use 15 GB of memory and imposed a time limit of 10 minutes of wall-clock time.

\subsection{Comparison between \toolOriginal and \toolLaufferOurEncoding on solving Pareto-optimal DIP}

\begin{table}[t]
    \caption{Overview of the outcomes of the encoding experiments}
    \label{tab:expOverviewPareto}
    \centering
    \setlength{\tabcolsep}{2.2pt}
    \begin{tabular}{c|l|rrrrr|rrrrr}
    \toprule
        \multirow{2}{*}{Tool} & \multicolumn{1}{c|}{\multirow{2}{*}{Result}} & \multicolumn{5}{c|}{\#~DFAs ($|\Sigma| = 2)$} & \multicolumn{5}{c}{\#~DFA ($|\Sigma| = 4$)} \\
        & & \multicolumn{1}{c}{2} & \multicolumn{1}{c}{3} & \multicolumn{1}{c}{4} & \multicolumn{1}{c}{5} & \multicolumn{1}{c|}{6} & \multicolumn{1}{c}{2} & \multicolumn{1}{c}{3} & \multicolumn{1}{c}{4} & \multicolumn{1}{c}{5} & \multicolumn{1}{c}{6}\\
        \hline
        \multirow{3}{*}{\toolOriginal~\cite{LaufferYVSS22}} & Success & 460 & 460 & 460 & 456 & 92 & 1058 & 129 & 85 & 72 & 72 \\
        & Memoryout & 0 & 0 & 0 & 4 & 397 & 21 & 467 & 981 & 989 & 978 \\
        & Timeout & 0 & 0 & 0 & 0 & 1 & 11 & 3 & 24 & 29 & 40 \\
        \hline
        \multirow{2}{*}{\toolLaufferOurEncoding~(ours)} & Success & 460 & 460 & 460 & 460 & 460 & 1090 & 1090 & 1090 & 1090 & 1089 \\
        & Timeout & 0 & 0 & 0 & 0 & 0 & 0 & 0 & 0 & 0 & 1 \\
    \bottomrule
    \end{tabular}
\end{table}

In Table~\ref{tab:expOverviewPareto} we report on how \toolOriginal and \toolLaufferOurEncoding performed on 460 benchmarks with 2 tasks ($|\Sigma| = 2$) and 1090 benchmarks with 4 tasks ($|\Sigma| = 4$) when changing the number of DFAs in the decomposition.
As we can see from the table, \toolOriginal can scale to 5 DFAs for the $|\Sigma| = 2$ benchmarks, but it already struggles at 3 DFAs when $|\Sigma| = 4$.
By just replacing the original APTA encoding with our 3DFA-based one, \toolLaufferOurEncoding has been able to solve all benchmarks except for one case up to 6 DFAs.
We have also run \toolLaufferOurEncoding up to 10 DFAs, obtaining success everywhere except for 2 and 3 timeouts on $|\Sigma| = 4$ for 9 and 10 DFAs, respectively.

\begin{figure}[t]
    \includegraphics{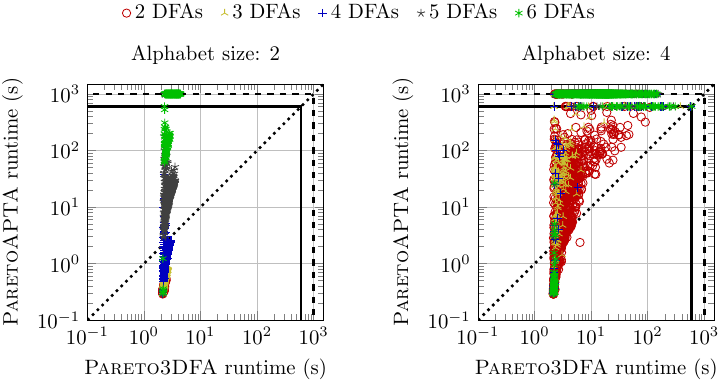}
    \caption{Running time comparison between \toolOriginal and \toolLaufferOurEncoding}
    \label{fig:expScatterLauffer}
\end{figure}
In Fig.~\ref{fig:expScatterLauffer} we show the runtime comparison between two methods \toolOriginal and \toolLaufferOurEncoding on all 1550 experiments considered in Table~\ref{tab:expOverviewPareto}. 
The scatter plots in the figure have logarithmic axes and marks above the dotted diagonal line mean that \toolOriginal took more time than \toolLaufferOurEncoding to solve the same benchmark;
the solid line at 600 seconds represents the timeout we imposed to the experiments while marks on the dashed line at 1000 seconds stand for to experiments where the corresponding tool went memoryout. 

As we can see from the plots, except for the cases taking very limited time, our \toolLaufferOurEncoding always significantly outperforms \toolOriginal in the running time, while producing DFAs with the same number of states as \toolOriginal on the commonly solved cases.
For benchmarks requiring at least 5 seconds to be computed by both tools, \toolLaufferOurEncoding is 1.4--80.3 times faster than \toolOriginal; for at least 10 seconds, the speedup lies in 2.8--40.5.

More detailed comparison on runtime can be found in Appendix~\ref{appendix:experiments}.

\subsection{The scalability for our method on solving states-optimal DIP}
\label{subsec:experimentsSODIP}

\begin{figure}
    \centering
    \includegraphics[scale=1.1]{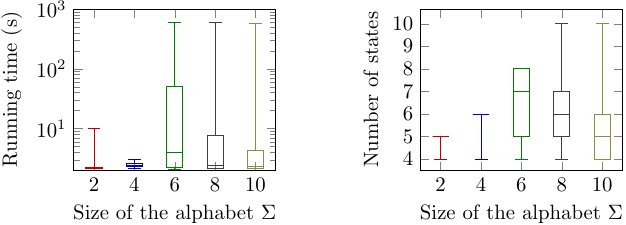}
    \caption{Box plots for the \toolOurStateOptimalDIP experiments}
    \label{fig:expBoxesSODIP}
\end{figure}

To evaluate how \toolOurStateOptimalDIP is able to scale with more challenging benchmarks, we ran it on the same benchmarks used for Table~\ref{tab:expOverviewPareto} as well as on 1150, 1190, and 1210 benchmarks with $|\Sigma| = 6$, $|\Sigma| = 8$, and  $|\Sigma| = 10$, respectively.
\toolOurStateOptimalDIP has completed successfully all experiments for $|\Sigma| \in \setnocond{2, 4}$; 
for $|\Sigma| = 6$ it solved 919 cases and went timeout on 231, while for $|\Sigma| = 8$ it solved 689 cases and went timeout on 501, and for $|\Sigma| = 10$ it solved 783 cases and went timeout on 427;
no failure by memoryout happened.

The box plots in Fig.~\ref{fig:expBoxesSODIP} show the distribution of the running time and of the number of states relative to the successfully solved benchmarks.
As we can see from the plots, \toolOurStateOptimalDIP is really fast for the simpler benchmarks with $|\Sigma| \in \setnocond{2, 4}$, taking less than 1 second and needing between 4 and 6 states to solve each of them.
For the more demanding benchmarks ($|\Sigma| \in \setnocond{6, 8, 10}$), more states are necessary (at most 10), so by Algorithm~\ref{alg:states_optimal_DIP} more cycles in the loop, more decompositions, and larger encoding formulas are generated and evaluated, as reflected by the higher running times shown in the plot on the left of Fig.~\ref{fig:expBoxesSODIP}. 
Appendix~\ref{appendix:experiments} presents more analysis on the behavior of \toolOurStateOptimalDIP.

\section{Conclusion and Future Work} \label{sec:conclusion}

In this paper we considered two DFA decomposition identification problems: the Pareto-optimal DIP, studied by Lauffer \emph{et al.} in~\cite{LaufferYVSS22}, and the states-optimal DIP that we introduced in this paper. 
To solve the former problem, we proposed an improved SAT encoding via 3DFA; 
compared to the encoding via APTA~\cite{LaufferYVSS22}, our method reduces the number of required encoding variables, thus significantly improving the efficiency, as confirmed by the experimental results, showing that our method is dramatically faster than the state-of-the-art method from~\cite{LaufferYVSS22}. 
We also proposed a solution method for the novel states-optimal DIP, and the experimental results on a large set of benchmarks demonstrate its scalability. 

For future work, we consider further improving the practical efficiency in running time. 
One possibility is to find a novel encoding method where the input is a fixed total number of states $N$ of the decomposition, but not a states allocation. In this way, if the answer is UNSAT, we can just increase the total number of states $N$ without computing all possible states allocations suitable for the current $N$.



\subsubsection{\ackname}
We thank the anonymous reviewers for their useful remarks that helped us improve the quality of the paper.
Work supported in part by the National Key R\&D Program of China under grant No.~2022YFA1005101, the Natural Science Foundation of China (NSFC) under grants No.~W2511064, No.~62472316, No.~62192732, No.~62032024 and No.~62032019, and the ISCAS Basic Research Grant No.~ISCAS-JCZD202406.
\newline\protect\includegraphics[height=8pt]{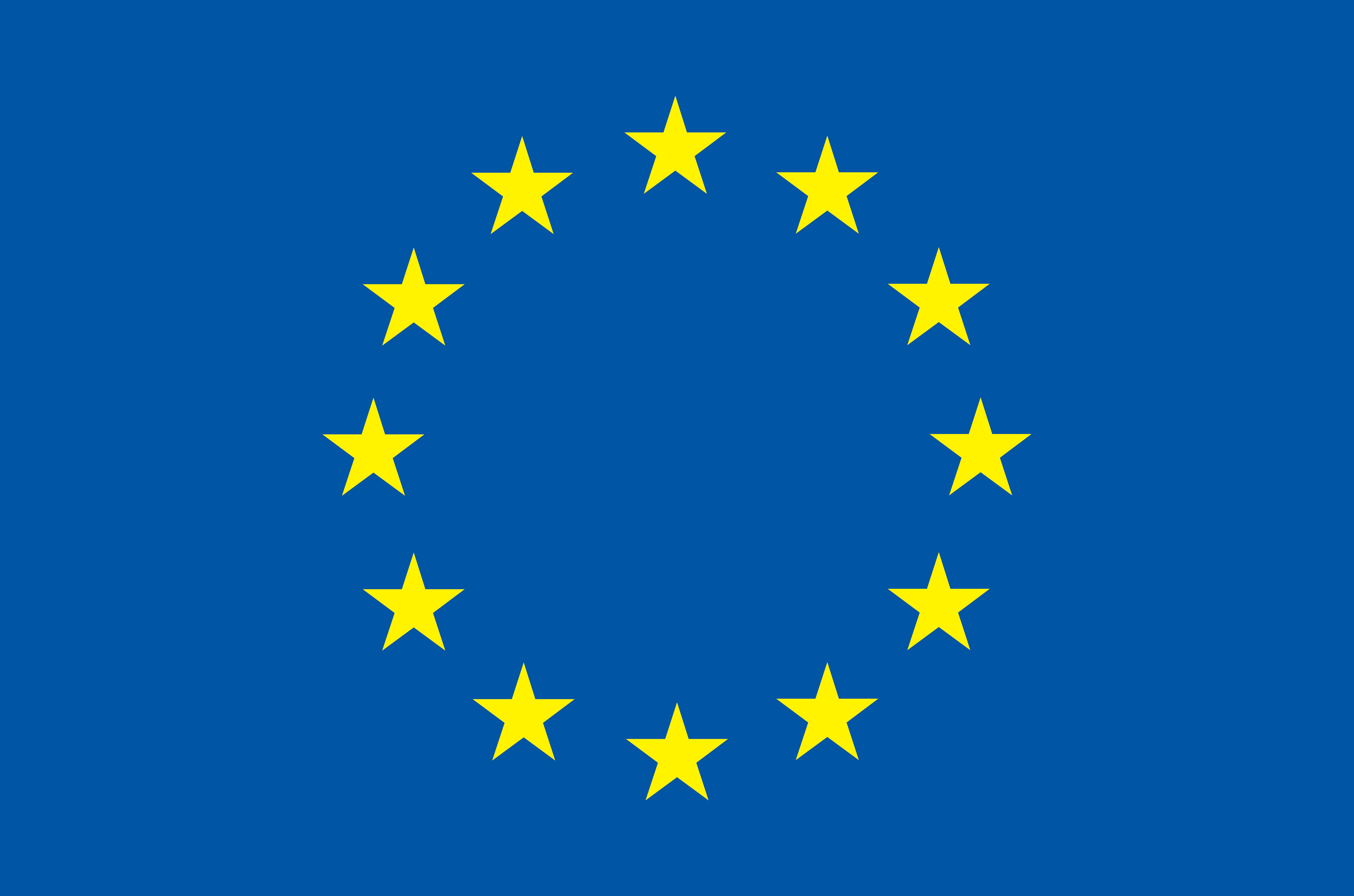} This project is part of the European Union’s Horizon 2020 research and innovation programme under the Marie Sk\l{}odowska-Curie grant no.\@ 101008233.

%
%
%
\bibliographystyle{splncs04}
\bibliography{reference}

%
\newpage
\appendix

\section{Proofs of Theorems and Lemmas} \label{appendix:proofs}

\restateDFACorrectness*
\begin{proof}
    The proof is by induction on the 3DFA construction steps that the construction preserves the following invariant: 
    the language of each node mapped by the $\mathtt{Register}$ to the same representative state is the same, that is, for all nodes $v, v'$, if $\mathtt{Register}(v) = \mathtt{Register}(v')$, then $\lang(v) = \lang(v')$.

    For the base case, at the end of the initialization phase, $\mathtt{Register}$ contains two types of representative states: 
    one single ``accepting'' representative state for all accepting nodes without outgoing transitions and one different ``rejecting'' representative state for each rejecting node.
    Obviously, the invariant is satisfied after the initialization phase, since all accepting nodes mapped to the only ``accepting'' representative state have $\setnocond{\epsilon}$ as language, given that they have no outgoing transitions; 
    each rejecting node is mapped to a different ``rejecting'' representative state, so $\mathtt{Register}(v) = \mathtt{Register}(v')$ holds only when $v' = v$ for each rejecting node~$v$.

    For the induction case, assume that the invariant holds for the current \texttt{Register}; 
    we need to prove that it holds also after $\mathtt{Register}$ has been updated to $\mathtt{Register}'$ by one iteration of the reduction process.
    By the reduction step, $\mathtt{Register}'$ is $\mathtt{Register}$ extended with new mappings, namely, for each equivalence class $C$ identified by the reduction step, all nodes in $C$ are mapped to the same newly created representative state.
    To show that the invariant is preserved it suffices to consider pairs of equivalent nodes (i.e., nodes belonging to the same equivalence class), since nodes in different classes are mapped to different newly created representative states.
    Thus, consider an equivalence class and two nodes $v$ and $v'$ in it.
    By definition of the reduction step, we have that
    \begin{itemize}
    \item 
        $\epsilon \in \lang(v) \iff \epsilon \in \lang(v')$, since either both $v$ and $v'$ are accepting (thus $\epsilon \in \lang(v)$ and $\epsilon \in \lang(v')$), or both are don't-care (thus $\epsilon \notin \lang(v)$ and $\epsilon \notin \lang(v')$);
    \item
        for every letter $a \in \Sigma$, we have two cases: 
        either both $v$ and $v'$ have no $a$-successor, or both have an $a$-successor (denoted $v_{a}$ and $v'_{a}$, respectively). 
        In the former case, for each word $u \in \Sigma^{*}$, we have $a \cdot u \notin \lang(v)$ and $a \cdot u \notin \lang(v')$. 
        In the latter case, the reduction step enforces $\mathtt{Register}(v_{a}) = \mathtt{Register}(v'_{a})$ and the inductive hypothesis ensures $\lang(v_{a}) = \lang(v'_{a})$. 
        This implies that for each $u \in \lang(v_{a}) = \lang(v'_{a})$, we have $a \cdot u \in \lang(v)$ and $a \cdot u \in \lang(v')$ and for each $u \in \Sigma^{*} \setminus \lang(v_{a}) = \Sigma^{*} \setminus \lang(v'_{a})$, we have $a \cdot u \notin \lang(v)$ and $a \cdot u \notin \lang(v')$.
    \end{itemize}
    Since the above cases cover all possible words in $\Sigma^{*}$, it follows that $\lang(v) = \lang(v')$ as required.
    \qed
\end{proof}

\restateLemma*
\begin{proof}
Let $u, u' \in \pref(S)$ be two different prefixes and assume that $u \in \pref(S^{-})$;
the case $u' \in \pref(S^{-})$ is analogous.
We prove the statement of the lemma by contradiction:
assume that $\trans(\init, u) = \trans(\init, u')$.
Since $u \in \pref(S^{-})$ by assumption, this implies that there exists a rejecting word $uy \in S^{-}$ with $y \in \Sigma^{*}$, that is, $r = \trans(\init, uy)$ is a rejecting state.
It then follows that $r$ is also reached by $u'v$, $\trans(\init, u'y) = \trans(\trans(\init, u'), y) = \trans(\trans(\init, u), y) = r$, since $\A$ is deterministic and $\trans(\init, u) = \trans(\init, u')$.
This then entails that $u'y$ is also a rejecting word in $S^{-}$; 
however, since $uy, u'y \in S^{-}$ and both reach $r$, we get a contradiction with Theorem~\ref{thm:dfacorrectness}, which guarantees that two different rejecting words are associated with different states by our construction.
\qed
\end{proof}

\restateEncodingCorrectness*
\begin{proof}
    We assume that the SAT encoding via APTA in~\cite{LaufferYVSS22} is correct. 
    Compared to it, there are two main differences in our improved encoding: (1) replacing APTA with 3DFA, and (2) replacing Constraint~\hyperref[constraint:O1]{\textbf{O1}} with Constraint~\hyperref[constraint:O1prime]{\textbf{O1'}}.
    For the first improvement, Theorem~\ref{thm:dfacorrectness} guarantees that the constructed 3DFA is consistent with the examples $S$ as the original APTA does.
    For the second one, Constraint \hyperref[constraint:O1prime]{\textbf{O1'}} allows a merged representative state to be associated with several states of each individual DFA in the DFA decomposition and ensures every rejecting state reached by a negative example in the 3DFA can be associated with at most one state $i$ of each individual DFA in the DFA decomposition. 
    The correctness of Constraint \hyperref[constraint:O1prime]{\textbf{O1'}} is guaranteed by Lemma~\ref{lem:prefix-unique}. 
    Therefore, the correctness of Theorem~\ref{thm:encoding_correctness} is proved. 
    \qed
\end{proof}

\restateTerminationCorrectness*
\begin{proof}
We first consider the termination of Algorithm~\ref{alg:states_optimal_DIP}. It is not hard to find that a correct decomposition $\mathcal{D}={\A_1,\dots,\A_n}$ can always be found when $N=2+\sum_{u\in S^-}{(|u|+2)}$ where a 2-states DFA accepting all words and every other DFA rejecting a different negative example $u\in S^-$. Therefore, the loop iterations in Algorithm~\ref{alg:states_optimal_DIP} are bounded by $1 + \sum_{u \in S^{-}} {(|u|+2)}$, as $N$ starts with $2$ and it is increased by $1$ every round. Then we have the termination of Algorithm~\ref{alg:states_optimal_DIP}. As Algorithm~\ref{alg:computeDFAcombinations} enumerates all possible states allocations in the states-optimal order, the correctness of Algorithm~\ref{alg:states_optimal_DIP} is straightforward from Theorem~\ref{thm:encoding_correctness} and the correctness of the SAT solver. 
    \qed
\end{proof}

    


\section{SAT Encoding of DFA-DIP and Algorithm for Solving Pareto-optimal DIP in~\cite{LaufferYVSS22}}
\label{sec:sat_encoding}
Below, we list the complete SAT encoding via APTA utilized in~\cite{LaufferYVSS22}. The encoding extends the SAT encoding for monolithic DFA identification. We refer to the constraints as follows:
\begin{enumerate}
    \item A positive example must be accepted by all DFAs:
    \[
    \bigwedge_{v \in V^{+}} \bigwedge_{k \in [n]} \bigwedge_{i \in [m_{k}]} x_{v,i}^{k} \implies z_{i}^{k}.
    \]
    \item A negative example must be rejected by at least one DFA:
    \[
    \bigwedge_{v \in V^{-}} \bigwedge_{k \in [n]} \bigwedge_{i \in [m_{k}]} x_{v,i}^{k} \implies \neg z_{i}^{k}.
    \]
    \item Each state of APTA has at least one color for each DFA:\footnote{In the CoRR version of~\cite{LaufferYVSS22}, the formula is written with the conjunction $\bigwedge_{i \in [m_{k}]}$ instead of the disjunction $\bigvee_{i \in [m_{k}]}$. However, to make the constraint satisfiable, this would cause all variables $x_{v,i}^{k}$ to be set to true, which is not what the constraint should require. The tool implementation indeed uses $\bigvee_{i \in [m_{k}]}$, so here we provide the corrected version of the constraint.}
    \[
    \bigwedge_{v \in V} \bigwedge_{k \in [n]} \bigvee_{i \in [m_{k}]} x_{v,i}^{k}.
    \]
    \item A transition of a DFA is set when a state and its parent are both colored:
    \[
    \bigwedge_{v \in V \setminus \{v_r\}} \bigwedge_{k \in [n]} \bigwedge_{i,j \in [m_k]} (x_{p(v),i}^{k} \land x_{v,j}^{k}) \implies y_{l(v),i,j}^{k}.
    \]
    \item A transition of a DFA targets at most one state:
    \[
    \bigwedge_{l \in \Sigma} \bigwedge_{k \in [n]} \bigwedge_{\substack{i,j,t \in [m_{k}] \\ j < t}} y_{l,i,j}^{k} \implies \neg y_{i,j,t}^{k}.
    \]
    \item Each state of APTA has at most one color for each DFA:
    \[
    \bigwedge_{v \in V} \bigwedge_{k \in [n]} \bigwedge_{i,j \in [m_k]} \neg x_{v,i}^{k} \lor \neg x_{v,j}^{k,j}.
    \]
    \item A transition of a DFA targets at least one state:
    \[
    \bigwedge_{l \in \Sigma} \bigwedge_{k \in [n]} \bigwedge_{i,j \in [m_k]} y_{l,i,j}^{k} .\]
    \item For each DFA, a node color is set when the color of the parent node and the transition between them are set:
    \[
    \bigwedge_{v \in V \setminus \{v_r\}} \bigwedge_{k \in [n]} \bigwedge_{i,j \in [m_k]} (x_{p(v),i}^{k} \land y_{l(v),i,j}^{k}) \implies x_{v,j}^{k}.
    \]
    \item Accepting-rejecting nodes of APTA cannot be merged:
    \[
    \bigwedge_{v^{-} \in V^{-}} \bigwedge_{v^{+} \in V^{+}} \bigwedge_{k \in [n]} \bigwedge_{i \in [m_{k}]} \left( x^{k}_{v^{-},i} \wedge \neg z^{k}_{i} \right) \implies \neg x^{k}_{v^{+},i}.
    \]
\end{enumerate}
The next set of constraints encode the symmetry breaking clauses introduced in \cite{UlyantsevZS15,UlyantsevZS16} to avoid consideration of isomorphic DFAs. The main idea is to enforce  individual DFA states to be enumerated in a depth-first search (DFS) order. Let $\Sigma = \{ l_1, \ldots, l_L \}$.

\begin{enumerate}
    \item Each state must have a smaller parent in the DFS order:
    \[
    \bigwedge_{k \in [n]} \bigwedge_{i \in [2, m_{k}]} \left( p^{k}_{i,1} \vee \cdots \vee p^{k}_{i,i-1} \right).
    \]

    \item Define $p^{k}_{j,i}$ in terms of auxiliary variable $t^{k}_{i,j}$:
    \[
    \bigwedge_{k \in [n]} \bigwedge_{\substack{i,j \in [m_{k}] \\ i < j}} \left( p^{k}_{j,i} \iff t^{k}_{i,j} \wedge t^{k}_{i+1,j} \wedge \cdots \wedge t^{k}_{j-1,j} \right).
    \]

    \item Define $t^{k}_{i,j}$ in terms of $y_{l,j,j}$:
    \[
    \bigwedge_{k \in [n]} \bigwedge_{\substack{i,j \in [m_{k}] \\ i < j}} \left( t^{k}_{i,j} \iff y^{k}_{l_1,i,j} \vee \cdots \vee y^{k}_{l_L,i,j} \right).
    \]

    \item The parent relationship follows the DFS order:
    \[
    \bigwedge_{k \in [n]} \bigwedge_{\substack{i,j,p,q \in [m_{k}] \\ i < p < j < q}} \left( p^{k}_{j,i} \implies \neg t^{k}_{p,q} \right).
    \]

    \item Define $m^{k}_{l,i,j}$ in terms of $y^{k}_{l,i,j}$:
    \[
    \bigwedge_{k \in [n]} \bigwedge_{\substack{i,j \in [m_{k}] \\ i < j}} \bigwedge_{\substack{l_r \in \Sigma}} \left( m^{k}_{l_r,i,j} \iff y^{k}_{l_r,i,j} \wedge \cdots \wedge y^{k}_{l_1,i,j} \right).
    \]

    \item Enforce DFAs to be DFS-enumerated in the order of symbols on transitions:
    \[
    \bigwedge_{k \in [n]} \bigwedge_{\substack{i,j,q \in [m_{k}] \\ i < j < q}} \bigwedge_{\substack{l_r,l_s \in \Sigma \\ r < s}} \left( p^{k}_{j,i} \wedge p^{k}_{q,i} \wedge m^{k}_{l_r,i,j} \implies \neg m^{k}_{l_s,i,k} \right).
    \]
\end{enumerate}

Based on the above encoding, Lauffer \emph{et al.} provided Algorithm~\ref{alg:Pareto_DIP_APTA} in~\cite{LaufferYVSS22}. 
Compared to it, our method for solving Pareto-optimal DIP replaces the SAT encoding via APTA  by our improved encoding via 3DFA in Line~\ref{alg3_line:solving} of Algorithm~\ref{alg:Pareto_DIP_APTA}.

\begin{algorithm}[H]
    \caption{Pareto-optimal DIP Solving~\cite{LaufferYVSS22}}
    \label{alg:Pareto_DIP_APTA}
    \SetKwInput{Input}{Input}\SetKwInOut{Output}{Output}\SetNoFillComment
    \Input{The labeled examples $S=\{S^+,S^-\}$; the number of DFAs $n$.}
    \Output{Pareto frontier $P^{*}$.}
        $(P^{*},Q) \gets \{(2, \ldots, 2)\}$\tcp*{Initial Pareto frontier and queue} 
        \While{$Q \neq \emptyset$}{
            $m \gets Q.\text{dequeue}()$\; 
            \If{$\nexists \hat{m} \in P^{*}$ such that $\hat{m} \prec m$}{
                $\mathit{SAT}, \A \gets \textsc{Solve}(n, m, S_+, S_-)$\; \label{alg3_line:solving}
                \If{$\mathit{SAT}$}{
                    $P^{*} \gets P^{*} \cup \A$\tcp*{Update Pareto frontier}
                }
                \Else{
                    \For{$k \gets 1$ \KwTo $n$}{
                        $(m',m'_k) \gets (m,m'_k + 1)$\;
                        \If{$\textup{ordered}(m')$}{
                            $Q.\text{enqueue}(m')$\;
                        }
                    }
                }
            }
        }
        \Return{$P^{*}$}\;
\end{algorithm}

\section{More Detailed Experimental Results}
\label{appendix:experiments}

In this appendix, we provide more detailed plots and analyses for the experiments we presented in Section~\ref{sec:experiments}.

\subsection{Distribution of the benchmark files}

\begin{table}
    \caption{Distribution of the benchmarks with respect to the number of examples, the maximum length of each example, and the alphabet size; entries in the table show for which alphabet size there are 10 benchmarks}
    \label{tab:expBenchmarkDistribution}
    \centering
    \setlength{\tabcolsep}{2pt}
    \begin{tabular}{c|cccccccc}
        \toprule
        \multirow{2}{*}{\#Examples} & \multicolumn{8}{c}{Max length} \\
        & 3 & 4 & 5 & 6 & 7 & 8 & 9 & 10 \\
        \hline
        5 & 2/4/6/8 & 2/4/6/8 & 2/4/6/8 & 2/4/6/8 & 2/4/6/8 & 2/4/6/8 & 2/4/6/8 & 2/4/6/8 \\
        10 & -/4/6/8 & 2/4/6/8 & 2/4/6/8 & 2/4/6/8 & 2/4/6/8 & 2/4/6/8 & 2/4/6/8 & 2/4/6/8 \\
        15 & -/4/6/8 & -/4/6/8 & 2/4/6/8 & 2/4/6/8 & 2/4/6/8 & 2/4/6/8 & 2/4/6/8 & 2/4/6/8 \\
        20 & -/4/6/8 & -/4/6/8 & -/4/6/8 & 2/4/6/8 & 2/4/6/8 & 2/4/6/8 & 2/4/6/8 & 2/4/6/8 \\
        25 & -/-/6/8 & -/4/6/8 & -/4/6/8 & 2/4/6/8 & 2/4/6/8 & 2/4/6/8 & 2/4/6/8 & 2/4/6/8 \\
        30 & -/-/6/8 & -/4/6/8 & -/4/6/8 & -/4/6/8 & 2/4/6/8 & 2/4/6/8 & 2/4/6/8 & 2/4/6/8 \\
        35 & -/-/6/8 & -/4/6/8 & -/4/6/8 & -/4/6/8 & -/4/6/8 & 2/4/6/8 & 2/4/6/8 & 2/4/6/8 \\
        40 & -/-/6/8 & -/4/6/8 & -/4/6/8 & -/4/6/8 & -/4/6/8 & 2/4/6/8 & 2/4/6/8 & 2/4/6/8 \\
        45 & -/-/6/8 & -/4/6/8 & -/4/6/8 & -/4/6/8 & -/4/6/8 & -/4/6/8 & 2/4/6/8 & 2/4/6/8 \\
        50 & -/-/6/8 & -/4/6/8 & -/4/6/8 & -/4/6/8 & -/4/6/8 & -/4/6/8 & 2/4/6/8 & 2/4/6/8 \\
        60 & -/-/-/8 & -/4/6/8 & -/4/6/8 & -/4/6/8 & -/4/6/8 & -/4/6/8 & -/4/6/8 & 2/4/6/8 \\
        70 & -/-/-/8 & -/4/6/8 & -/4/6/8 & -/4/6/8 & -/4/6/8 & -/4/6/8 & -/4/6/8 & -/4/6/8 \\
        80 & -/-/-/8 & -/4/6/8 & -/4/6/8 & -/4/6/8 & -/4/6/8 & -/4/6/8 & -/4/6/8 & -/4/6/8 \\
        90 & -/-/-/8 & -/4/6/8 & -/4/6/8 & -/4/6/8 & -/4/6/8 & -/4/6/8 & -/4/6/8 & -/4/6/8 \\
        100 & -/-/-/- & -/4/6/8 & -/4/6/8 & -/4/6/8 & -/4/6/8 & -/4/6/8 & -/4/6/8 & -/4/6/8 \\
        \bottomrule
    \end{tabular}
\end{table}

Table~\ref{tab:expBenchmarkDistribution} shows how benchmarks are distributed among the different choices of number of examples, maximum length of each example, and alphabet size.
Each entry in the table says where there are 10 benchmarks of the corresponding combination: 
for example, entry ``-/4/6/8'' at ``\#Examples'' 10 and ``Max length'' 3 means that there are 10 benchmarks are available for all alphabet sizes except for $|\Sigma| = 2$. 
This is because with 2 letters and words of length at most 3, we can have at most $1 + 2 + 4 + 8 = 15$ words, and these are not enough to have 10 positive examples and 10 negative examples in the benchmarks; 
with larger alphabets, instead, we have enough words to choose randomly 10 accepted and 10 rejected words to populate the example sets.
This is why we have a different total number of benchmarks for the different alphabet sizes. 
Note that we also consider alphabet size 10 but we don not report it in the table since all entries would have it;
alphabet of size 10 also enables to have 10 benchmarks of maximum length 2 and 5 examples in each of the positive and negative example sets.
Since no other combination of alphabet size and number of examples allows us to get benchmarks with maximum length 2, we omit the corresponding column from the table.

\subsection{Runtime comparison between \toolOriginal and \toolLaufferOurEncoding}

\begin{figure}
    \resizebox{\linewidth}{!}{\includegraphics{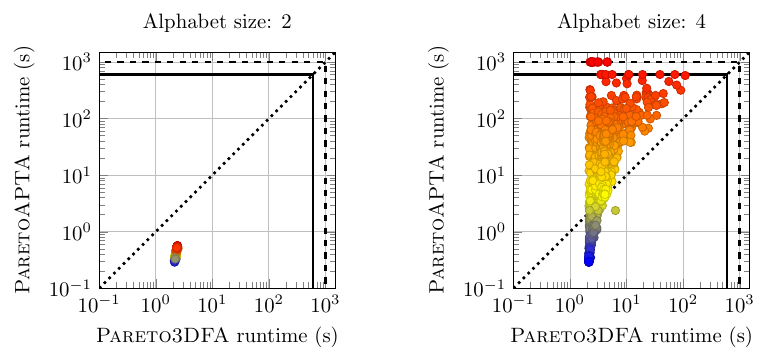}}
    \caption{Running time comparison between \toolOriginal and \toolLaufferOurEncoding using two DFAs}
    \label{fig:expScatterLauffer2DFAs}
\end{figure}

\begin{figure}[th]
    \resizebox{\linewidth}{!}{\includegraphics{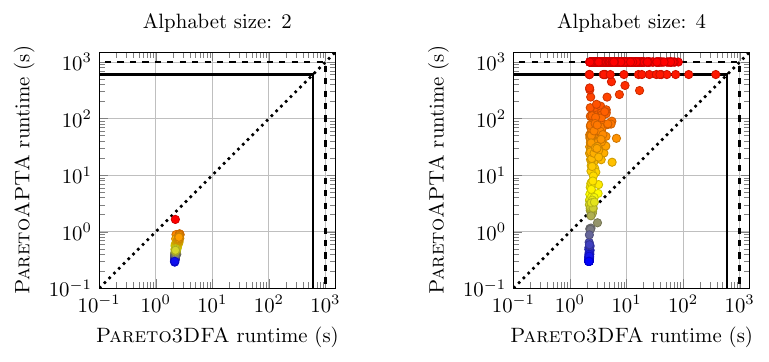}}
    \caption{Running time comparison between \toolOriginal and \toolLaufferOurEncoding using three DFAs}
    \label{fig:expScatterLauffer3DFAs}
\end{figure}

\begin{figure}
    \resizebox{\linewidth}{!}{\includegraphics{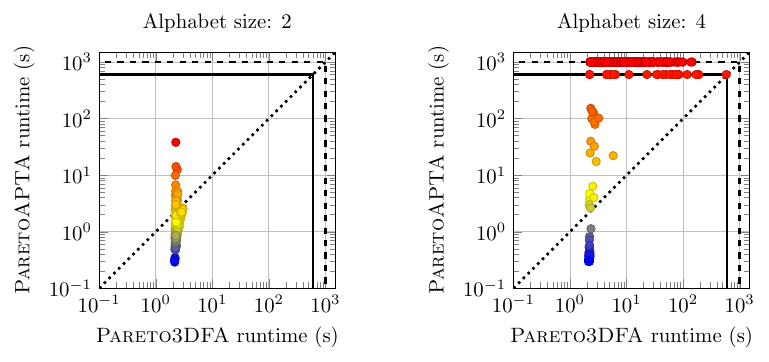}}
    \caption{Running time comparison between \toolOriginal and \toolLaufferOurEncoding using four DFAs}
    \label{fig:expScatterLauffer4DFAs}
\end{figure}

\begin{figure}
    \resizebox{\linewidth}{!}{\includegraphics{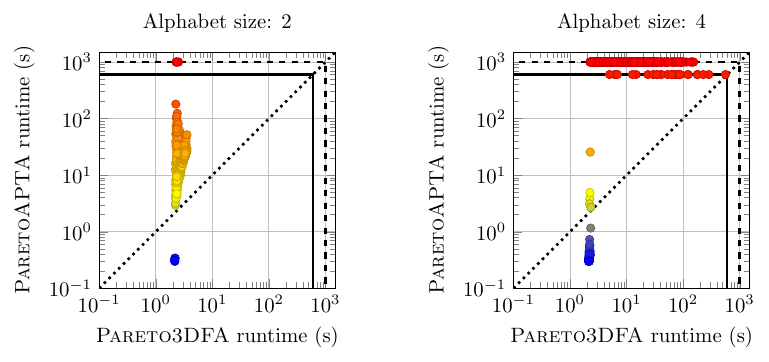}}
    \caption{Running time comparison between \toolOriginal and \toolLaufferOurEncoding using five DFAs}
    \label{fig:expScatterLauffer5DFAs}
\end{figure}

\begin{figure}
    \resizebox{\linewidth}{!}{\includegraphics{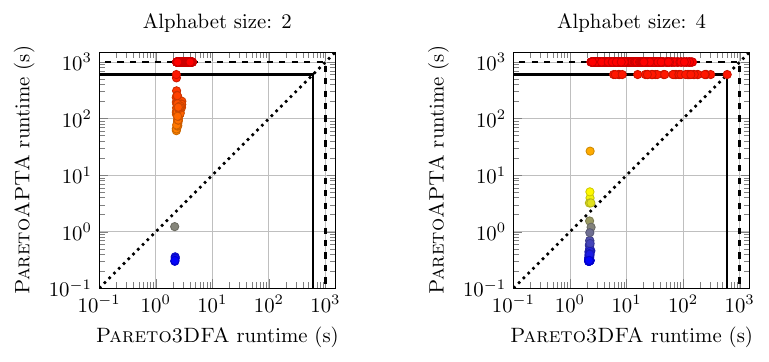}}
    \caption{Running time comparison between \toolOriginal and \toolLaufferOurEncoding using six DFAs}
    \label{fig:expScatterLauffer6DFAs}
\end{figure}

Fig.~\ref{fig:expScatterLauffer2DFAs} shows the experiments on running \toolOriginal and \toolLaufferOurEncoding on the 460 benchmarks with $|\Sigma| = 2$ and 1050 benchmarks with $|\Sigma| = 4$ when using a decomposition with two DFAs.
As we can see from the plots, for $|\Sigma| = 2$, \toolOriginal solved all cases by taking between 0.3 and 0.6 seconds;
\toolLaufferOurEncoding instead took between 2.1 and 2.5 seconds.
These larger running times can be explained by the additional computation needed in the reduction of the 3DFA obtained by merging nodes that are equivalent (cf.~Section~\ref{subsec:3dfa_encoding}); 
the time spent in this reduction is not compensated by the time saved by the SAT solver to decide the satisfiability of the smaller encoding formulas.
The situation however changes when we consider the benchmarks with $|\Sigma| = 4$:
here, the more demanding a benchmark is, the more effective the 3DFA encoding becomes, as shown by the cloud of points above the diagonal.

Fig.~\ref{fig:expScatterLauffer3DFAs} is similar to Fig.~\ref{fig:expScatterLauffer2DFAs}, but in this case we set the number of DFAs in the decomposition to be three.
Here, the running time of \toolLaufferOurEncoding for $|\Sigma| = 2$ is essentially the same (between 2.1 and 2.7 seconds);
\toolOriginal instead starts requiring more time (between 0.3 and 1.7 seconds).
The increased demand by \toolOriginal is more evident from the plot relative to $|\Sigma| = 4$, where the cloud of points is narrower and several points occur on the timeout/memoryout lines.  

The sensibility of \toolOriginal to the number of DFAs in the decomposition is more and more clear when we increase them to four, five, and six, as shown by the plots in Fig.~\ref{fig:expScatterLauffer4DFAs}, \ref{fig:expScatterLauffer5DFAs}, and~\ref{fig:expScatterLauffer6DFAs}, respectively.
In these plots, more and more points approach the timeout/memoryout lines for \toolOriginal, also when $|\Sigma| = 2$; 
\toolLaufferOurEncoding instead is less affected by the number of DFAs.

\subsection{Behavior of \toolLaufferOurEncoding when varying the number of DFAs}

\begin{figure}
    \centering
    \includegraphics{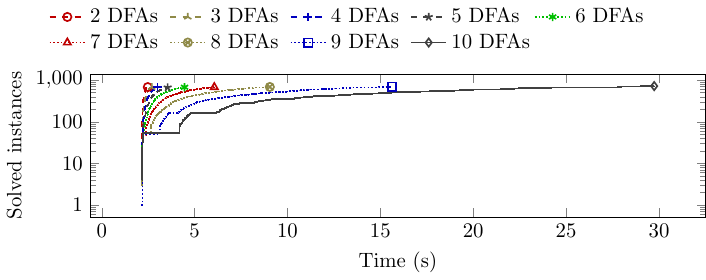}
    \caption{Cactus plot for the solved cases vs.\@ time by \toolLaufferOurEncoding on benchmarks with $|\Sigma| = 2$}
    \label{fig:expCactusLOETime2}
\end{figure}

\begin{figure}
    \centering
    \includegraphics{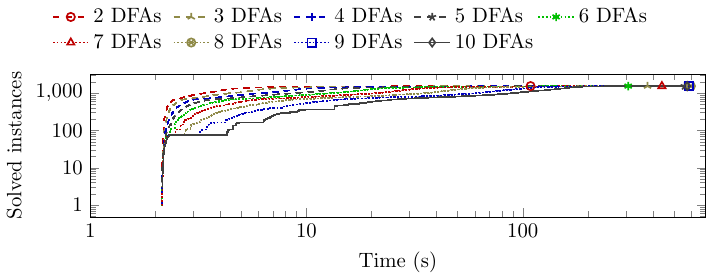}
    \caption{Cactus plot for the solved cases vs.\@ time by \toolLaufferOurEncoding on benchmarks with $|\Sigma| = 4$}
    \label{fig:expCactusLOETime4}
\end{figure}

In Fig.~\ref{fig:expCactusLOETime2} and~\ref{fig:expCactusLOETime4} we show the cactus plots for the solved cases vs.\@ time by \toolLaufferOurEncoding on benchmarks with $|\Sigma| = 2$ and $|\Sigma| = 4$, respectively, where we ran \toolLaufferOurEncoding with the number of DFAS in the decomposition ranging between two and ten. 
A point $(x, y)$ in the plot means that there have been $y$ instances that have taken at most $x$ seconds each to be analyzed successfully.
We also add a marker corresponding to the instance taking the longest time for a given number of DFAs in the decomposition.

As one would expect, increasing the number of DFAs causes \toolLaufferOurEncoding to run for longer time before solving the problem.
This is due to two main factors: 
with more DFAs, there might be more decompositions that need to be checked before finding the minimal one under the Pareto-optimal partial order;
with more DFAs, the formula obtained from encoding the 3DFA is larger and with more variables, thus the SAT solver is likely to require more time before deciding its satisfiability.
This is evident in the plot in Fig.~\ref{fig:expCactusLOETime2} about the 460 benchmarks with $|\Sigma| = 2$:
the point corresponding to the instance that took the highest amount of time to be solved moves steadily to the right of the plot when we increase the number of DFAs from two to ten.
This is slightly different for the plot in Fig.~\ref{fig:expCactusLOETime4} about the 460 benchmarks with $|\Sigma| = 4$:
for instance, the highlighted point for six DFAs is on the left of the points for three, four, and five DFAs. 
This is caused by the fact that on six DFAs, \toolLaufferOurEncoding went timeout in one case, the one taking the longest for three, four, and five DFAs.

\subsection{Behavior of \toolOurStateOptimalDIP on varying the size of the alphabet}

\begin{figure}[th]
    \centering
    \includegraphics{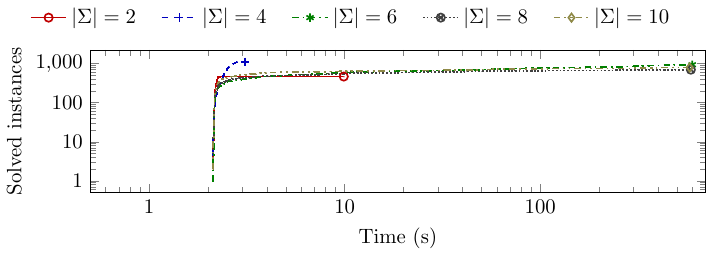}
    \caption{Cactus plot for the solved cases vs.\@ time by \toolOurStateOptimalDIP}
    \label{fig:expCactusSODIPtime}
\end{figure}

\begin{figure}[th]
    \centering
    \includegraphics{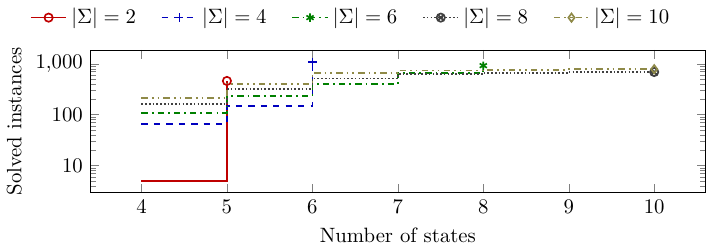}
    \caption{Cactus plot for the solved cases vs.\@ states by \toolOurStateOptimalDIP}
    \label{fig:expCactusSODIPstates}
\end{figure}

We now focus our attention on the behavior of \toolOurStateOptimalDIP when we change the size of the alphabet.
In Fig.~\ref{fig:expCactusSODIPtime} we show the cactus plots for the solved cases vs.\@ the running time while in Fig.~\ref{fig:expCactusSODIPstates} we consider the states;
these two plots present in a different format the same outcomes shown by the box plots in Fig.~\ref{fig:expBoxesSODIP}.

As already reported in Section~\ref{subsec:experimentsSODIP}, \toolOurStateOptimalDIP solved successfully all 460 benchmarks for $|\Sigma| = 2$ and 1090 benchmarks for $|\Sigma| = 4$; 
there have been experiments that did not complete for the other alphabet sizes:
out of 1150 benchmarks with $|\Sigma| = 6$, there have been 231 timeouts while out of 1190 benchmarks with $|\Sigma| = 8$, there have been 501 timeouts;
lastly, for the 1210 benchmarks with $|\Sigma| = 10$, there have been 427 timeouts.
No failure by memoryout occurred in any of the benchmarks.

As we can see from the plots, the larger the alphabet is, the longer \toolOurStateOptimalDIP takes to give a successful decomposition.
This is related to the fact that with more letters in the alphabet, it is more difficult to find equivalent states in the 3DFA-based encoding, so the acceptor has more states in it.
Moreover, with a larger alphabet, we need more variables to encode the transitions, so the encoding formula is larger and the SAT solver is likely to require more time to decide its satisfiability.

All these things also reflect on the number of states, since we likely need more states in the DFAs in the decomposition to capture the different languages of the states in the 3DFA-based acceptor.

\begin{table}
    \caption{Distribution of the minimal states computed by \toolOurStateOptimalDIP{} vs.\@ alphabet size}
    \label{tab:expStateDistributionSODIPalphabet}
    \centering
    \setlength{\tabcolsep}{4pt}
    \begin{tabular}{c|r|rrrrrrr|r}
        \toprule
        \multirow{2}{*}{$|\Sigma|$} & \multirow{2}{*}{\#Benchmarks} & \multicolumn{7}{c|}{Number of states $N$} & \multirow{2}{*}{Failures}\\
        & & \multicolumn{1}{c}{4} & \multicolumn{1}{c}{5} & \multicolumn{1}{c}{6} & \multicolumn{1}{c}{7} & \multicolumn{1}{c}{8} & \multicolumn{1}{c}{9} & \multicolumn{1}{c|}{10} & \\
        \midrule
        2 & 460 & 5 & 455 & 0 & 0 & 0 & 0 & 0 & 0 \\
        4 & 1090 & 66 & 81 & 943 & 0 & 0 & 0 & 0 & 0 \\
        6 & 1150 & 107 & 129 & 162 & 255 & 266 & 0 & 0 & 231 \\
        8 & 1190 & 160 & 159 & 195 & 113 & 35 & 25 & 2 & 501 \\
        10 & 1210 & 210 & 196 & 250 & 85 & 18 & 21 & 3 & 427 \\
        \midrule
        Total & 5100 & 548 & 1020 & 1550 & 453 & 319 & 45 & 5 & 1159 \\
        \bottomrule
    \end{tabular}
\end{table}
In Table~\ref{tab:expStateDistributionSODIPalphabet} we report the number of benchmarks solved by \toolOurStateOptimalDIP with the computed number of states, split by alphabet size.
From the table we can see that the larger the alphabet is, the more states are needed. 
This is also caused by the fact that with more letters available, it is possible to have more examples in the positive and negative example sets forming the benchmark $S = (S^{+}, S^{-})$ (cf.~Table~\ref{tab:expBenchmarkDistribution}).
Table~\ref{tab:expStateDistributionSODIPmaxLength} is similar to Table~\ref{tab:expStateDistributionSODIPalphabet} except for the fact that we report the number of states against the maximum length of the examples; 
in Table~\ref{tab:expStateDistributionSODIPexamples} we instead consider the number of positive and negative examples.
\begin{table}
    \caption{Distribution of the minimal states computed by \toolOurStateOptimalDIP{} vs.\@ maximum length of the examples}
    \label{tab:expStateDistributionSODIPmaxLength}
    \centering
    \setlength{\tabcolsep}{4pt}
    \begin{tabular}{c|r|rrrrrrr|r}
        \toprule
        \multirow{2}{*}{Max length} & \multirow{2}{*}{\#Benchmarks} & \multicolumn{7}{c|}{Number of states $N$} & \multirow{2}{*}{Failures} \\
        & & \multicolumn{1}{c}{4} & \multicolumn{1}{c}{5} & \multicolumn{1}{c}{6} & \multicolumn{1}{c}{7} & \multicolumn{1}{c}{8} & \multicolumn{1}{c}{9} & \multicolumn{1}{c|}{10} & \\
        \midrule
        2 & 10 & 10 & 0 & 0 & 0 & 0 & 0 & 0 & 0 \\
        3 & 440 & 63 & 78 & 101 & 84 & 60 & 46 & 5 & 3 \\
        4 & 620 & 70 & 83 & 215 & 84 & 72 & 0 & 0 & 96 \\
        5 & 630 & 65 & 108 & 199 & 46 & 45 & 0 & 0 & 167 \\
        6 & 650 & 66 & 125 & 203 & 30 & 1 & 0 & 0 & 225 \\
        7 & 660 & 69 & 133 & 203 & 43 & 1 & 0 & 0 & 211 \\
        8 & 680 & 68 & 151 & 205 & 31 & 42 & 0 & 0 & 183 \\
        9 & 700 & 72 & 162 & 214 & 61 & 58 & 0 & 0 & 133 \\
        10 & 710 & 65 & 180 & 210 & 75 & 40 & 0 & 0 & 141 \\
        \midrule
        Total & 5100 & 548 & 1020 & 1550 & 453 & 319 & 45 & 5 & 1159 \\
        \bottomrule
    \end{tabular}
\end{table}
\begin{table}
    \caption{Distribution of the minimal states computed by \toolOurStateOptimalDIP{} vs.\@ number of examples}
    \label{tab:expStateDistributionSODIPexamples}
    \centering
    \setlength{\tabcolsep}{4pt}
    \begin{tabular}{c|r|rrrrrrr|r}
        \toprule
        \multirow{2}{*}{$|S^{+}| =|S^{-}| $} & \multirow{2}{*}{\#Benchmarks} & \multicolumn{7}{c|}{Number of states $N$} & \multirow{2}{*}{Failures}\\
        & & \multicolumn{1}{c}{4} & \multicolumn{1}{c}{5} & \multicolumn{1}{c}{6} & \multicolumn{1}{c}{7} & \multicolumn{1}{c}{8} & \multicolumn{1}{c}{9} & \multicolumn{1}{c|}{10} & \\
        \midrule
        5 & 410 & 318 & 92 & 0 & 0 & 0 & 0 & 0 & 0 \\
        10 & 390 & 181 & 188 & 21 & 0 & 0 & 0 & 0 & 0 \\
        15 & 380 & 47 & 248 & 85 & 0 & 0 & 0 & 0 & 0 \\
        20 & 370 & 2 & 209 & 159 & 0 & 0 & 0 & 0 & 0 \\
        25 & 360 & 0 & 120 & 225 & 15 & 0 & 0 & 0 & 0 \\
        30 & 350 & 0 & 53 & 229 & 67 & 0 & 0 & 0 & 1 \\
        35 & 340 & 0 & 30 & 183 & 97 & 1 & 0 & 0 & 29 \\
        40 & 340 & 0 & 30 & 142 & 90 & 16 & 0 & 0 & 62 \\
        45 & 330 & 0 & 20 & 85 & 86 & 26 & 0 & 0 & 113 \\
        50 & 330 & 0 & 20 & 71 & 71 & 45 & 0 & 0 & 123 \\
        60 & 310 & 0 & 10 & 70 & 19 & 50 & 1 & 0 & 160 \\
        70 & 300 & 0 & 0 & 70 & 7 & 52 & 6 & 0 & 165 \\
        80 & 300 & 0 & 0 & 70 & 0 & 39 & 17 & 1 & 173 \\
        90 & 300 & 0 & 0 & 70 & 1 & 43 & 18 & 1 & 167 \\
        100 & 290 & 0 & 0 & 70 & 0 & 47 & 4 & 3 & 166 \\
        \midrule
        Total & 5100 & 548 & 1020 & 1550 & 453 & 319 & 45 & 5 & 1159 \\
        \bottomrule
    \end{tabular}
\end{table}

\end{document}